\algrenewcommand\algorithmicrequire{\textbf{Input:}}
\algrenewcommand\algorithmicensure{\textbf{Initialization:}}
\algrenewcommand\algorithmicreturn{\textbf{Return:}}
\newtheorem{lemma}{Lemma}
\theoremstyle{definition} 
\newcommand{\blind}{1}
\newcommand*{\addFileDependency}[1]{
  \typeout{(#1)}
  \@addtofilelist{#1}
  \IfFileExists{#1}{}{\typeout{No file #1.}}
}
\newcommand*{\myexternaldocument}[1]{%
    \externaldocument{#1}%
    \addFileDependency{#1.tex}%
    \addFileDependency{#1.aux}%
}
\newcommand*{\affmark}[1][*]{\textsuperscript{#1}}
\newcommand*{\email}[1]{#1}
\newcommand*{\affaddr}[1]{#1}
\def\cT{\calT}
\def\cM{\calM}
\begin{document}

\def\spacingset#1{\renewcommand{\baselinestretch}%
{#1}\small\normalsize} \spacingset{1}


\if1\blind
{
  \title{\Large \textbf{GS-BART: Bayesian Additive Regression Trees with Graph-split Decision Rules}}
 \author{Shuren He\affmark[1], Huiyan Sang\affmark[1], Quan Zhou\affmark[1]\\
  \affaddr{\affmark[1]Department of Statistics}\\
\affaddr{Texas A\&M University, College Station} \\
\email{dri.tea, huiyan, quan@stat.tamu.edu\\}}
  \maketitle
} \fi

\if0\blind
{
  \bigskip
  \bigskip
  \bigskip
  \begin{center}
    {\Large\textbf{GS-BART: Bayesian Additive Regression Trees with Graph-split Decision Rules}}
  \end{center}
  \medskip
} \fi

\bigskip
\begin{abstract}
Ensemble decision tree methods such as XGBoost, Random Forest, and Bayesian Additive Regression Trees (BART) have gained enormous popularity in data science for their superior performance in machine learning regression and classification tasks. In this paper, we introduce a new Bayesian graph-split additive decision tree method, GS-BART, designed to enhance the performance of axis-parallel split-based BART for dependent data with graph structures. The proposed approach encodes input feature information into candidate graph sets and employs a flexible split rule that respects the graph topology when constructing decision trees. We consider a generalized nonparametric regression model using GS-BART and design a scalable informed MCMC algorithm to sample the decision trees of GS-BART. The algorithm leverages a gradient-based recursive algorithm on root directed spanning trees or chains. The superior performance of the method over conventional ensemble tree models and Gaussian process regression models is illustrated in various regression and classification tasks for spatial and network data analysis.
\end{abstract}

\noindent%
{\it Keywords:}  Bayesian Nonparametric Regression, Complex Domain, Decision Trees, Informed MCMC, Spatial Prediction, Spanning Tree
\vfill

\newpage
\spacingset{1.9} 
\section{Introduction}\label{sec:intro} 
Bayesian additive decision tree models~\citep[BART,][]{chipman2010bart} and its variants have been widely used for modeling nonparametric functions of the covariates (also called features). BART consists of multiple Bayesian decision tree weak learners where a split decision rule is used at each internal tree node to recursively partition the feature space. In the current literature, the most commonly adopted decision rule is a binary decision based on whether a selected univariate feature is higher than a certain cut point, which bipartitions the feature space in an axis-parallel fashion. When adopting the common assumption that the function value within each subspace is a constant, each Bayesian decision tree corresponds to a hyper-rectangularly shaped piecewise constant function.

The axis-parallel split rule can be inefficient in capturing non-axis-aligned decision boundaries in piecewise constant functions, as the true split decisions may depend on multiple features. Moreover, when some features lie on a complex non-Euclidean space or a network, the axis-parallel split rule ignores the domain geometry and may naively group regions separated by physical barriers into the same cluster. For example,  in spatial regression and classification problems, it is common to encounter spatial data collected from complex domains (e.g., with interior holes) or the spatial regression function having sharp changes across irregularly shaped boundaries (e.g., across a highway). Similar examples are also seen in network regression problems, where both network topology and node attributes need to be incorporated for predicting node-level responses.

Various approaches have been developed to model the decision boundary with more flexible shapes using linear combinations of multiple features \citep{blaser2016random, hada2023sparse}, but the computation of such algorithms is demanding and the linear shape can still be restrictive for modeling the decision boundary.  
Most recently, \cite{stone2024addivortes} developed a Bayesian additive model based on Voronoi tesselation clustering that yields convex hyper-polygon-shaped partitions of feature space in each weak learner. \cite{dahl2024modeling} incorporated both spatial bases and covariates as inputs of BART to model the latent intensity function for Poisson point pattern data on a road network. \cite{luo2022bamdt} considered a spatial regression setting and employed spanning tree cut models to recursively bipartition the spatial space into contiguous spatial clusters with very flexible shapes. \cite{deshpande2024flexbart} proposed a model called flexBART to handle regression problems with structured categorical predictors. They encoded the structure of categorical variables by a network and also used random spanning tree cuts to achieve more flexible partitions of the categorical data. Both \cite{luo2022bamdt} and \cite{deshpande2024flexbart} only considered spatial normal regression problems for spatially point-referenced data, and only one spatial tree graph was used for each weak learner. Moreover, they relied on the standard Bayesian backfitting algorithm that may suffer from slow mixing~\citep{pratola2016efficient}. 

In this paper, our contribution is to offer a novel regression model formulation where the feature information is encoded and incorporated into the regression function through feature graphs. Built upon feature graphs, we develop a flexible graph-split-based Bayesian additive decision tree model, termed GS-BART, as an extension of BART to accommodate complex graph-structured features. The decision rule of GS-BART is built upon bipartition models of feature graphs, thereby allowing highly flexible nonlinear decision boundaries while complying with complex domain geometries. Due to the versatility of building feature graphs, the GS-BART model framework is flexible and potentially generalizable to various settings with different types of feature data. We demonstrate its applicability using spatial regression and network regression as examples in this paper. In particular, we show that standard BART models can be unified into the GS-BART framework when encoding univariate feature information by chain graphs.

Another key contribution is to propose an efficient gradient-based informed Markov chain Monte Carlo (MCMC) algorithm, applicable to a wide range of generalized response regression problems. Standard Bayesian inference for BART uses a tailored backfitting MCMC sampler~\citep{chipman2010bart} to draw decision trees and other model parameters successively. However, the random walk split-merge Metropolis-Hastings (MH) sampler used in BART often mixes slowly \citep{ronen2022mixing, kim2025mixing}, and the marginal likelihood ratio required in the MH algorithm lacks a closed-form expression for many non-Gaussian response models.  
We propose to leverage a quadratic approximation of the loglikelihood function to explicitly express marginal likelihood ratios using the first and second-order gradient information for different response models. Further taking advantage of the order of the directed root spanning tree structure, we develop a parallel and recursive gradient message passing algorithm to quickly compute the marginal likelihood ratios for all possible neighboring states of the current decision tree. This enables a rejection-free \textsl{informed Bayesian inference} for sampling decision trees ~\citep{zhou2022rapid}. To our knowledge, informed MCMC sampling for decision trees has not been utilized in the generalized regression setting. 
We evaluate the proposed method against various competing approaches and demonstrate its superior performance across comprehensive synthetic and real-world datasets.
  
\medskip

\noindent \textbf{Related Work.}
Numerous methods have been developed for addressing regression problems with structured features. A notable class of these methods is Gaussian process regression (GPR), especially in the spatial setting, where the spatial effect is modeled through a Gaussian process, while the effects of other covariates are captured in the mean function of GPR using parametric or nonparametric regression models such as BART proposed by \cite{zhang2007spatially}, generalized additive models proposed by \citet{nandy2017additive}, random forests as explored by \citet{saha2023random} and neural networks by \cite{zhan2024neural}. However, the additive nature of the models for spatial and other covariate effects may fail to account for potential interactions between these two types of features. Moreover, traditional GPR methods often assume a parametric covariance function defined on Euclidean space, and they may struggle when the spatial random effect exhibits sharp discontinuities, heterogeneous smoothness, or lies within domains with highly irregular boundaries.

Efforts have also been made to enhance the computational efficiency of BART. For instance, \citet{murray2021log} proposed conjugate data augmentation techniques for log-linear BART models. 
\citet{he2023stochastic} introduced XBART, a stochastic search algorithm that proposes decision trees by recursively computing the marginal likelihood ratio across candidate cut points. However, XBART is biased and does not converge to the exact posterior distribution of BART. Recently, \cite{mohammadi2020continuous} 
proposed a rejection-free sampler based on a continuous-time birth-death Markov process to improve the mixing in a normal regression setting. \cite{kim2025mixing} considered an informed MCMC scheme for sampling a single decision tree with dyadic cut points. \citet{linero2024generalized} proposed a flexible generalized BART regression for non-Gaussian responses that employs the split-merge MCMC for tree updates, along with a Laplace approximation and Fisher scoring algorithm for parameter updates. Despite these advancements, these methods rely on axis-parallel split rules, which may not be suitable for data with graph relations.

\section{Preliminaries}

Let $[n_{\text{train}}]=\{1, 2, \ldots, n_{\text{train}}\}$ be the index set of $n_{\text{train}}$ training samples, $[n_{\text{test}}]=\{n_{\text{train}}+1, \ldots, n_{\text{train}}+n_{\text{test}}\}$ be the index set of $n_{\text{test}}$ testing samples. 
Define $n=n_{\text{train}} + n_\text{test}$, and we write  $[n]=\{1, 2, \dots, n\}$, denoting the indices of all samples.  
Let $\calT$ denote a decision tree that recursively bipartitions the samples into the leaf node set denoted by $\underline{\xi}(\calT)=\{\xi_1(\calT), \xi_2(\calT), \dots, \xi_{\ell}(\calT) \}$, where $\ell=\ell(\calT)$ is the number of leaf nodes of $\calT$, and $\xi_k(\calT)$ is the set consisting of the indices of samples that belong to the $k$-th leaf node. Therefore, $\underline{\xi}(\calT)$ forms a set partition of $[n]$ such that $\xi_k \neq \emptyset$ for any $k$, $\xi_k \cap \xi_{k'}=\emptyset$ for any $k \neq k'$ and $\bigcup_{k=1}^\ell \xi_k=[n]$. Given $\calT$, samples associated with each leaf node follow a leaf node-specific model, which we assume to be a normal model with an unknown constant mean (also called leaf weight).  
We let $\calM=\big(\mu_1(\calT), \mu_2(\calT), \dots, \mu_\ell(\calT)\big)$ denote the leaf weight parameters, where $\mu_k \in \bbR$  for each $k$. 

A decision tree $\calT$ and its leaf weight parameters  $\calM$ define a  partition of $[n]$ and a piecewise constant model as follows, 
\begin{equation}
    g_i( \calT,\calM) = \sum_{k=1}^{\ell(\calT)}  \mu_k(\calT) \ind_{ \xi_k(\calT) }(i),  \quad i\in [n].
\end{equation} 
Summing over $T$ decision trees, called weak learners, we have the additive decision tree,
\begin{equation}\label{eq:sumg}
\phi_i=\sum_{t=1}^T g_{t,i} (\calT_t,\calM_t)
\end{equation}

In a decision tree, the recursive bipartitions of samples are determined by a sequence of top-down decision rules at each internal decision node of $\calT$.  
In BART, the decision rules of $\calT$ depend on a set of covariates $\textbf{X}=\{X_1, \ldots, X_{p_x}\}$; at each internal node, the decision rule takes the form $X>c$, where only one covariate $X\in \bfX$ is used at a time, and $c$ is the cut point. Such a decision rule results in axis-parallel decision boundaries, which recursively partition the covariate space into $\ell$ hyper-rectangular shaped subspaces, each associated with a leaf node.  
~\cite{chipman2010bart} modeled each weak learner $\calT$ by a generative tree prior, which specifies the leaf node splitting probability, and, if a split occurs,  draws $X$ from $p_x$ features and $c$ from a set of candidate cut points to form the decision rule.  

However, features may have complex multivariate or graph structures. For example, to predict election results at the county level, in addition to a set of social demographic covariates, the network structure of the county adjacency graph might be used to construct informative decision rules to partition county-level data. Moreover, when a subset of features (e.g., latitude and longitude) is known to jointly influence the response, defining a decision rule depending on multivariate features may improve the partition efficiency. Existing BART models are too restrictive or inefficient for handling such multivariate or graph features. This motivates us to reformulate the decision tree model to allow for more flexible decision rules to accommodate complex structured features.

\section{Graph-split Bayesian Additive Decision Trees}\label{sec:additivetree}
\subsection{Model Formulation}\label{subsec:modelformu}  
We consider a generalized nonparametric regression model, called GS-BART, with the response $\bfy=(y_1, \dots, y_n)$.
Conditional on the latent variable $\phi_i=\sum_{t=1}^T g_{t,i} (\calT_t,\calM_t)$ as in \eqref{eq:sumg}, we model 
\begin{equation}\label{eq:def-likelihood}
     p(\bfy | \{ \calT_t, \calM_t\}_{t =1}^T,\sigma)=\prod_{i=1}^n f_\sigma \left( y_i | \phi_i \right)
\end{equation} 
where $f_\sigma$ is the probability density or mass function of the response model with nuisance parameter $\sigma$. For normally distributed data, the model takes the form
    $y_i \sim \phi_i + \epsilon_i$ with $\epsilon_i \iidsim \text{Normal}(0, \sigma^2)$, for $i=1,\ldots, n,$
and thus $\phi_i$ is the unknown regression mean of $y_i$. 
For Poisson count data, $\phi_i$ represents the unknown rate of $y_i$, and the model can be further generalized by assuming 
$y_i \sim \text{Poisson}(E_i \, \phi_i)$ where $E_i$ denotes the exposure variable.   

To model each decision tree $\calT$, rather than using covariates as the regression input features,  we consider a general notion of input features using a set of candidate feature graphs encoding the observed feature information. The decision rules are then based on recursive bipartitions of the selected graphs from the candidate feature graph set. We shall provide several concrete examples for constructing candidate feature graphs for different types of feature information in Section~\ref{subsec:candidategraph}. In particular, we shall show that standard BART can be unified into this flexible regression model framework by incorporating covariate information as a set of chain graphs. 
Throughout the paper, note the distinction between ``vertices" in graphs versus ``nodes" in decision trees. 

Specifically, for the $t$-th decision tree,  we assume that the input is a set of $p_t$ candidate directed rooted tree graphs (also called arborescence), denoted as $\dbmG_t= \{\dG_{j}^{(t)}\}_{j=1}^{p_t}$, encoding various feature structural information. Each candidate graph $\dG$ has its own vertex set $V$ and directed edge set $\dE$, where $V =\{v_1, v_2, \dots, v_{ |V| }\}$ is a pre-specified partition of $[n]$ such that each sample belongs to one and only one vertex.  Note that multiple samples could belong to the same vertex, and we allow some vertices to contain an empty set of samples. 
When $|V|$ is chosen to be much smaller than the sample size $n$, the vertex set construction serves as a binning strategy to reduce memory consumption and speed up model training. The edge set $\dE$ captures feature similarity structure among binned samples at each vertex. 
Each directed edge in $\dE$ points from a vertex to its unique parent vertex, except for the root vertex, which has an outdegree of $0$.  

We choose arborescences as candidate graphs due to their nice properties for recursive graph partitions; disconnecting one edge $\de$ in $\dE$ results in a bipartition of the vertex set $V$ into two connected components, $V_L$ and $V_R$, whose graphs are still arborescences. Accordingly, the samples are bipartitioned into two subsets, $S_L(\dG, \de)=\bigcup_{v \in \calV_L} v$ and $S_R(\dG, \de)=\bigcup_{v \in \calV_R} v$.  
Therefore, the structure of an arborescence supports recursive bipartitions of data which can be used as node decision rules for a decision tree. 
We will show in Section~\ref{sec:model-imp} that arborescences also enjoy many computational benefits. In particular, it naturally defines an order of the directed edges, which enables the design of an efficient informed MCMC algorithm using recursive message passing. We note that while general graphs can also reflect feature structural information, formulating and computing recursive bipartition models on such graphs is often much more cumbersome.

Given the candidate graphs set $\dbmG$, we propose a graph split decision tree, where the decision rule at each internal node is defined as a bipartition of a particular candidate arborescence, called the \emph{decision arborescence}. We refer to an edge of the decision arborescence as a \emph{decision edge}.
We adopt the following prior to specify how the graph-split rule-based decision tree is generated 
(also see Figure \ref{fig:Construct-split}(A) for an example):  
\begin{enumerate}
    \item Start with the root node representing all samples $[n]$.
    \item Split a current leaf node $\xi$ with probability $p_{\text{split}}(\xi)$. If $\xi$ splits, according to a decision rule prior $p_{\text{rule}}(\xi)$, choose one candidate graph $\dG=(V, \dE)$ from $\dbmG$ as the decision graph and one edge $\de$ from $\dE$ as the decision edge to split $\xi$ into two offspring nodes $\xi_L$ and $\xi_R$ given by 
$
    \xi_L=\xi \cap S_L(\dG, \de), \quad  \xi_R=\xi \cap S_R(\dG, \de).$
Set $\xi$ as an internal node and its two offspring nodes as the leaf nodes in the new decision tree. 
\item   Repeat Step 2 for the new decision tree.
\end{enumerate} 

\begin{figure}[h]
\centering
		\includegraphics[width=6in, height=4in]{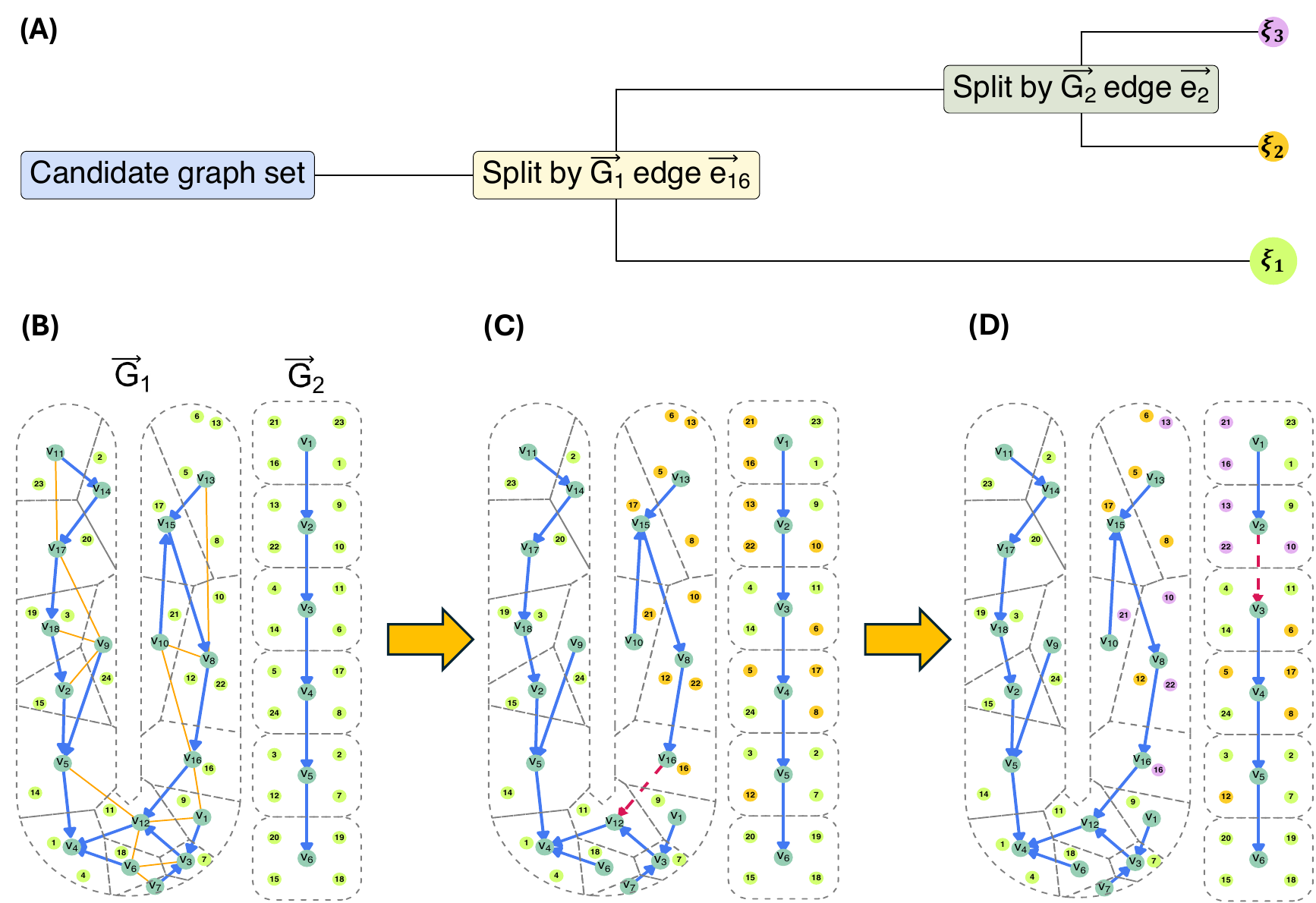}
        \setlength{\belowcaptionskip}{-5pt}
        \caption{
        (A): A graph split decision tree with two internal nodes and three leaf nodes;
        (B): The candidate graph set $\{\protect \dG_1, \protect \dG_2\}$ consisting of a spatial tree arborescence (marked by blue arrows) on a U-shape and a chain arborescence connecting bins of an unstructured numerical feature. $\protect\dG_1$ is a random arborescence from a spatial neighborhood graph $G_{0}$ (marked by both orange lines and blue arrows) connecting vertices representing Voronoi cells (marked by black dashed polygons).  
        (C)-(D): The two decision graphs and edges (marked by red dashed arrows) used in (A). The leaf node memberships for samples associated with each candidate graph vertex (marked by lime, orange, and purple colored dots) are updated as the decision tree grows.
        }   \label{fig:Construct-split}
\end{figure}

We adopt the same default prior given by \cite{chipman2010bart} for the leaf node split probability, 
$p_{\text{split}}(\xi)=\alpha(1+d_\xi)^{-\beta},$ where $\alpha \in(0,1), \beta \in[0, \infty)$ are hyperparameters, 
and $d_\xi$ is the depth of $\xi$ (the depth of the root node is 0). We recommend setting  $\alpha=0.95$, $\beta=2$ to encourage a shallow decision tree.
For the decision rule prior at a node $\xi$, we only consider the space of all \emph{valid} decision rules such that $\xi_L$ and $\xi_R$ are nonempty. 
We assume $p_{\text{rule}}(\xi)=  \pi_{e}(\de| \dG ,\xi)\pi_{g}(\dG|\xi)$, where $\pi_{g}$ is the prior distribution of $\dG\in \dbmG$ that assigns nonzero probabilities to each \emph{valid} decision graph  at node $\xi$, and similarly, $\pi_{e}$ is the prior distribution on the 
set of all \emph{valid} decision edges of $\dG$. We adopt uniform probabilities for both $\pi_{g}$ and $\pi_{e}$ in our numerical examples, but extensions to nonuniform distributions such as the sparsity-inducing method in \cite{linero2018bayesian} are straightforward. We note that the sets of valid candidate decision graphs and valid candidate decision edges vary as the decision tree grows. The details are deferred to Section~\ref{subsubsec:splitmove} and Section~\ref{app_subsec:GSB-prior}.
 
Finally, for the conditional prior of $\calM=\big(\mu_1(\calT), \mu_2(\calT), \dots, \mu_\ell(\calT)\big)$ given $\calT$, we assume $\mu_1, \mu_2, \dots, \mu_\ell$ are independent a priori, with each
$\mu_k$ following a normal prior $\pi(\mu_{k})\sim N(\mu_0,\sigma_\mu)$.
It is known that the performance of BART can be sensitive to the choice of $\sigma_{\mu}$. 
\cite{chipman2010bart} fixed $\sigma_{\mu}$ in a data-driven way such that the resulting prior variance of $\phi$ matches with the empirical data variance in the normal regression. However, the choice of $\sigma_\mu$ could be challenging beyond normal regression settings.  
Therefore, in the generalized GS-BART, we treat $\sigma_{\mu}$ as an unknown parameter similar to \citet{he2023stochastic} and learn it from data. We assign an inverse Gamma prior distribution,
$
\sigma_\mu^{2} \sim \text{inv-Gamma} \left(a/2, b/2 \right)
$. 
Based on simulation studies, we recommend $a=3$ and $b=\text{var}(\bfy)/ T$ as the default choice. 
 
\subsection{Build Candidate Graphs}\label{subsec:candidategraph} 
The choice of candidate graph set is a key ingredient in the GS-BART method; it reflects the feature structural information and governs the algorithm's computational complexity. Below, we provide several concrete examples to demonstrate how to construct candidate graphs from various types of feature information.  

\begin{example}[unstructured numerical features] 
Consider a numerical feature $X\in \bbR$ and denote its observed values for the $n$ observations by $x_1, \dots, x_n$.  
We sort the $n$ observations, and an arborescence $\dG$ then can be constructed by binning adjacent observations. Explicitly, choose a positive integer $|V| \leq n$ and cut points $-\infty = c_0 < c_1 <   \dots < c_{|V|-1} < c_{|V|} = \infty$. 
Then, $\dG$ is constructed as the chain graph with $|V|$ vertices $\{v_1, \dots, v_{|V|}\}$, where $v_k = \{i \in [n]: c_{k-1}<x_i<c_k\}$,  
and the edge set  $\dE = \{ (v_k, v_{k+1}) \colon 1 \leq k \leq |V| - 1 \}$.   
Bipartitioning the vertices set by a decision edge $(v_k, v_{k+1})$ in $\dG$ is equivalent to bipartition the $n$ observations by $X > c_k$ used in the classical BART.  
This procedure can be repeated multiple times, by using different $|V|$ or cut points, to create a set of graphs encoding the information in $X$. An example of a chain arborescence is shown in $\dG_2$ of Figure \ref{fig:Construct-split}(B).
\end{example}

\begin{example}[network features] \label{ex:network-graph}
Consider the scenario where we observe an attributed network. Each vertex of the network represents either a training sample with response and covariates or a testing sample with covariates only. The goal is to predict the response of testing sample using both the node-level covariates and the network structure. 
In addition to using chain arborescences encoding feature information from each node attribute, we can also construct candidate graphs representing network feature information. 
Let $A_0$ denote the observed network. We utilize community detection algorithms to group nodes in $A_0$ into $K$ connected subgraphs based on their network connectivity patterns~\citep{clauset2004community}. This grouping can be viewed as binning on $A_0$. We define $G_0$ as a graph where its vertex represents a bin (i.e., a subgraph) and its edge connects two bins if there exists at least one edge in $A_0$ linking the two subgraphs. This binned graph also defines the assignment of each training or testing sample to the vertex in $G_0$. 
\end{example}

In principle, a general graph $G_0$ can be recursively bipartitioned into connected components viewed as the recursive splitting of a binary decision tree. 
However, general graph bipartitions are computationally cumbersome. In contrast, tree graph partitions offer many computational advantages while still preserving feature structural information. \cite{luo2021bayesian} proved that for any arbitrary bipartition of $G_0$, there exists a subset of spanning trees of $G_0$ such that removing an edge from any tree in this subset induces the bipartition. This property has motivated the use of undirected spanning trees for graph partition problems in the literature \citep{teixeira2019bayesian,luo2021bayesian}.  
In GS-BART, we use multiple random spanning trees of $G_0$ as candidate graphs in each weak learner to enrich the graph bipartition space. In particular, 
we randomly pick a vertex as the root to orient an undirected spanning tree into an arborescence, whose order is useful for designing an efficient informed MCMC algorithm in Section~\ref{subsec:IIT}. 
Given $G_0$ with $|V|$ vertices and $|E|$ edges, algorithms such as \cite{kelner2009faster} can generate approximately uniform random arborescence samples with the computational complexity of $O(|E| \sqrt{|V|}/\delta)$, where $\delta$ is the approximation factor. Since the choice of the root vertex does not affect the graph bipartition space, for each spanning tree of $G_0$, we choose one random root to orient it.  
\begin{example}[spatial features on manifold] Consider the scenario where there are some multivariate features (e.g. latitude and longitude), denoted as $\bfs$, lying on a known manifold  $\calS$. We discretize $\mathcal{S}$ into a set of disjoint blocks, $\left\{\mathcal{S}_1^*, \ldots, \mathcal{S}_J^*\right\}$, such that $\bigcup_j \mathcal{S}_j^*=\mathcal{S}$. Common choices of $\mathcal{S}^*$ include triangular meshes, constrained Voronoi tessellations~\citep{voronoi1908nouvelles}, and regular rectangular or hexagonal meshes.
We use the constrained Voronoi tessellation method in our numerical examples. It partitions $\calS$ into blocks for a given set of reference points on $\calS$ such that each region contains all points closest to a given reference point compared to any others. Each block serves as a spatial bin, and we construct a spatial neighborhood graph $G_{0}$, where its vertex $v$ represents each block and the samples falling into it, and edges connect blocks sharing a border. 
The number of reference points, i.e., $|V|$ can be chosen to be much smaller than the number of observations to reduce the computation on graphs for large-scale problems.  

Note that $G_0$ is a general structural graph. Similar to Example~\ref{ex:network-graph}, we sample a number of random arborescences to include in our candidate graph set for each weak learner. 
Figure \ref{fig:Construct-split} (B) shows an example of using Voronoi tessellation to bin a U-shape manifold and construct the spatial structural graph $G_{0}$ and its arborescence $\dG$. A contiguous bipartition of $\dG$ results in a contiguous bipartition of $G_0$ and the U-shape manifold. Unlike the Voronoi tesselation clustering methods where each Voronoi cell is a convex hyper-polygon-shaped cluster, graph bipartition allows Voronoi cells of reference points to freely merge and form flexible contiguous space partitions while respecting the manifold domain geometry. 
 
\end{example}
 
We emphasize that GS-BART is highly versatile because of its flexibility in constructing candidate graph sets. Across the ensemble of weak learners, we allow $\dbmG_t$ to be constructed differently to enhance model diversity, reduce computation, or reflect prior model assumptions. In Example 2,  GS-BART can easily accommodate multiple observed networks by constructing random arborescences for each network and combining them into the candidate graph set.  Similarly, in Example 3, we may let the tessellation mesh, $\left\{\mathcal{S}_1^*, \ldots, \mathcal{S}_J^*\right\}$, vary in location and size across weak learners to enhance the diversity of candidate partitions for more flexible decision boundaries. 
Even with a relatively small sample of random arborescences in each weak learner for the network and spatial examples, we find that GS-BART could achieve reasonably good model performance in our numerical examples as long as the number of weak learners is large enough. More details are presented in the sensitivity analysis in Section~\ref{sec:experi}. Furthermore, letting candidate graph sets vary among weak learners allows users to incorporate prior assumptions on interaction effects among features. For instance, if identifying or including the interaction effects between specific features is important, candidate graph sets should be constructed such that arborescences representing these features appear simultaneously in the same candidate graph set.

\section{Bayesian Inference of GS-BART}\label{sec:model-imp}
\subsection{Standard Backfitting Algorithm and Its Limitations} \label{subsec}
We consider the generalized regression framework as in~\eqref{eq:def-likelihood}. We denote the likelihood and loglikelihood function given each $y_i$ by $L_{\sigma}(\phi_i) = f_\sigma(y_i | \phi_i)$ and $\mathcal{L}_{\sigma}(\phi_i) = \log f_\sigma(y_i | \phi_i)$,  respectively. We  assume that $\mathcal{L}_{\sigma}(\phi_i)$  is twice differentiable with respect to $\phi_i$, which is satisfied by many response models such as normal, Poisson, and multinomial regressions.  
 
The posterior distribution of the parameter vector $(\{\calT_t\}_{t=1}^T, \{\calM_t\}_{t=1}^T, \sigma_\mu, \sigma)$ is usually computed by the backfitting MCMC sampler, which successively draws $(\calT_1, \calM_1), \ldots, (\calT_T, \calM_T)$, $\sigma_\mu$, and $\sigma$ from their respective posterior conditional distributions.  
Sampling of $\sigma$ is response model specific and relatively straightforward, so we omit the details.

Let $\calM_t = \big(\mu_{t,1},   \dots, \mu_{t, \ell_t}\big)$, where  $\ell_t$ is the number of leaf nodes of $\cT_t$, and $\mu_{t,k}\coloneqq \mu_{k}(\calT_t)$ is the leaf weight at the $k$-th leaf node of $\calM_t$. 
Define $\vartheta_{-t} = (\{\calT_k\}_{k \neq t}, \{\calM_{k}\}_{k\neq t}, \sigma_\mu, \sigma)$. We omit data from the conditional posterior distributions below for notational simplicity.  BART algorithms sample $\calT_t$ from $p(\calT_t | \vartheta_{-t}) = \int p(\calT_t, \calM_t | \vartheta_{-t}) d\calM_t$ by integrating out $\calM_t$. 
Specifically, $p(\calT_t | \vartheta_{-t}) \propto \pi(\calT_t) \prod_{k = 1}^{\ell_t} m_{\xi_k}$, where $\pi(\calT_t)$ is the prior distribution of $\calT_t$, whose expression is given in Supplementary Section~\ref{app_subsec:GSB-prior}, and $m_{\xi_k}$ is the marginal data likelihood at the leaf node $\xi_k$ taking the form 
\begin{equation}\label{eq:marlike}
m_{\xi_k}= m_{\xi_k}(\vartheta_{-t}) = \int \pi(\mu_{t,k};\mu_0,\sigma^2_{\mu}) \prod_{i \in \xi_k \cap [n_{\text{train}}]}  L_{\sigma}(  \phi_{-t,i} + \mu_{t,k} ) d \mu_{t, k}, 
\end{equation}
where $\phi_{-t,i}=\sum_{k \neq t} g_{t,i}(\calT_t,\calM_t)$, and $\pi(\mu_{t,k})$ is the normal prior of $\mu_{t,k}$. Given $\calT_t$, we draw each $\mu_{t, k}$ from $ p(\mu_{t, k} | \calT_t,  \vartheta_{-t})\propto \pi(\mu_{t,k};\mu_0,\sigma^2_{\mu}) \prod_{i \in \xi_k \cap [n_{\text{train}}]}  L_{\sigma}( \phi_{-t,i} + \mu_{t,k} )$.

The standard BART algorithm is a sampler of the split-merge type.  
At each iteration, a decision tree $\calT$ is selected randomly, and a new tree $\calT^\star$ is proposed by performing either a split or merge move of $\calT$. In the split move, one splittable leaf node of $\calT$ is randomly chosen and split into two offspring nodes, while a pruning move does the opposite by randomly merging two leaf nodes with the same parent node. 
The resulting $\calT^\star$ is accepted or rejected following the standard MH procedure. Since arborescence enables recursive bipartitions of data through decision edges, the split and merge proposals can be adapted to the context of GS-BART. A split move involves proposing a new graph split decision rule at the selected leaf node by choosing a valid decision arborescence and a valid decision edge. 
We defer the details to Section~\ref{subsec:quadratic}.  
However, this random-walk split-merge-type proposal tends to have a low acceptance rate, resulting in slow mixing of MCMC \citep{pratola2016efficient}. Furthermore, the MH acceptance ratio involves a marginal likelihood ratio $\prod_{k =  1}^{\ell\left( \calT^{*}\right)} m_{\xi_k^{*}}/ \prod_{k =  1}^{\ell\left( \calT \right)} m_{\xi_k}$, which typically lacks a closed form for many non-Gaussian response models, and the same is true for the posterior distribution of leaf weights. 

In light of these limitations, below, we develop a rejection-free informed MCMC algorithm in Section~\ref{subsec:IIT} to improve the convergence of MCMC.  We also develop a quadratic loglikelihood approximation method in Section~\ref{subsec:quadratic} to facilitate the computations of the informed proposal distributions, which enables conjugacy to simplify the computations of marginal likelihood ratios and sampling of leaf weights.  
 
\subsection{Quadratic Approximation of Marginal Likelihood}\label{subsec:quadratic}
 
In regression models with normal errors, $\log m_{\xi}$ can be expressed as a closed quadratic form due to  conjugacy, which, unfortunately, does not hold in many generalized linear models.  We use the second-order approximation of the response data likelihood function to derive an approximation of $\log m_{\xi}$ with a closed quadratic form. 
Let $\hat{\phi}_{i}$ be an estimated value of $\phi_i$. In our numerical examples, we set $\hat{\phi}_{i}$ as the value generated in the previous MCMC iteration,  
which implies that $\phi_i-\hat{\phi}_i=\mu_{t,k}-\hat{\phi}_{t,i}$  for $i \in \xi_{k}$, where $t$ is the index of the decision tree being modified in the current iteration. 
Let $\dot{\mathcal{L}}_{\sigma}(\hat{\phi}_i)$ and $\ddot{\mathcal{L}}_{\sigma}(\hat{\phi}_i)$ denote the first and second order derivatives of $\calL_{\sigma}$ with respect to $\phi_i$, evaluated at $\hat{\phi}_{i}$. 
Under the generalized response model setting, $ \mathcal{L}_{\sigma}(\phi_{i}) \approx \hat{\mathcal{L}}_{\sigma}(\phi_{i})=\mathcal{L}_{\sigma}(\hat{\phi}_{i}) +   ( \mu_{t,k} - \hat{\phi}_{t,i} ) \dot{\mathcal{L}}_{\sigma}(\hat{\phi}_{i})  +   (\mu_{t,k} - \hat{\phi}_{t,i})^2  \ddot{\mathcal{L}}_{\sigma}(\hat{\phi}_{i}) /2 $,
which yields the following approximation of the marginal data likelihood and conditional posterior distributions of $\calM_t$ and  $\sigma_{\mu}^2$.  
\begin{lemma}\label{lm:marg-like}
Let $\hat{m}_{\xi_k}$ denote the marginal likelihood given in~\eqref{eq:marlike} with $\mathcal{L}_{\sigma}(\phi_{i})$ replaced by its second-order Taylor expansion at $\hat{\phi}_i$. Then,  
\begin{equation}\label{eq:def-m-xi}
	\begin{aligned}
		\log \hat{m}_{\xi_k} = &  - \frac{\mu_0^2}{2 \sigma_{\mu}^2} +\sum_{i \in \xi_k \cap [n_{\text{train}}] } \Bigl( \mathcal{L}_{\sigma}(\hat{\phi}_{i}) -  \hat{\phi}_{t,i}  \dot{\mathcal{L}}_{\sigma}(\hat{\phi}_{i}) +  \frac{\hat{\phi}_{t,i}^2 \ddot{\mathcal{L}}_{\sigma}(\hat{\phi}_{i})  }{2}  \Bigl) \\
		& + \dfrac{\big(J(\xi_k) + \frac{\mu_0}{\sigma_{\mu}^2}\big)^2}{2 \big(H(\xi_k) + \frac{1}{\sigma_{\mu}^2}\big)} - \log \sigma_{\mu} \sqrt{H(\xi_k) + \frac{1}{\sigma_{\mu}^2}}, 
	\end{aligned}
\end{equation}
where $J(\xi_k) = \sum_{i \in \xi_k \cap [n_{\text{train}}]}  (\dot{\mathcal{L}}_{\sigma}(\hat{\phi}_{i}) - \hat{\phi}_{t, i} \ddot{\mathcal{L}}_{\sigma}(\hat{\phi}_{i})    ) $ and $H(\xi_k) =  - \sum_{i \in \xi_k \cap [n_{\text{train}}]} \ddot{\mathcal{L}}_{\sigma} (\hat{\phi}_{i} )$. 
Under this approximation, the conditional posterior distributions of $\mu_{t,k}$ and $\sigma_{\mu}^2$ are given by
  \begin{align}\label{eq:updateM}
     \mu_{t,k} & |  \cT_t, \cT_{-t}, \cM_{-t}, \sigma_\mu, \sigma \sim \;  N \left(\frac{J(\xi_k) + \frac{\mu_0}{\sigma_{\mu}^2}}{H(\xi_k)+ \frac{1}{\sigma_{\mu}^2}},\; \frac{1}{H(\xi_k)+ \frac{1}{\sigma_{\mu}^2}} \right), \; \text{for }  k  = 1, \dots, \ell_t, \\
       \sigma_{\mu}^2 & | \{\cT_{t}, \cM_{t}\}_{t=1}^T, \sigma \sim\;  \text{inv-Gamma} \left( \frac{ \sum_{t = 1 }^T \ell_t + a}{2}, \; \frac{ \sum_{t = 1 }^T \sum_{k = 1}^{\ell_t}\left(\mu_{t,k} - \mu_0 \right)^2 + b}{2} \right).
\end{align}
\end{lemma}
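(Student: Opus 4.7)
The plan is to derive the three claims by substituting the second-order Taylor expansion of $\mathcal{L}_{\sigma}(\phi_i)$ into the integrand of equation~\eqref{eq:marlike}, reducing the inner integral to a one-dimensional Gaussian integral in $\mu_{t,k}$, and then invoking conjugacy for the two posterior conditionals.

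First, I would fix a leaf node $\xi_k$ and write, for each $i \in \xi_k \cap [n_{\text{train}}]$, the relation $\phi_i - \hat{\phi}_i = \mu_{t,k} - \hat{\phi}_{t,i}$ (which is the defining identity of $\hat{\phi}_{t,i}$). Substituting into the second-order Taylor expansion and expanding $(\mu_{t,k}-\hat{\phi}_{t,i})^2 = \mu_{t,k}^2 - 2\mu_{t,k}\hat{\phi}_{t,i} + \hat{\phi}_{t,i}^2$ gives $\hat{\mathcal{L}}_{\sigma}(\phi_i)$ as a quadratic polynomial in $\mu_{t,k}$ with: (i) constant (in $\mu_{t,k}$) part equal to $\mathcal{L}_{\sigma}(\hat{\phi}_i) - \hat{\phi}_{t,i} \dot{\mathcal{L}}_{\sigma}(\hat{\phi}_i) + \tfrac{1}{2}\hat{\phi}_{t,i}^2 \ddot{\mathcal{L}}_{\sigma}(\hat{\phi}_i)$, (ii) linear coefficient $\dot{\mathcal{L}}_{\sigma}(\hat{\phi}_i) - \hat{\phi}_{t,i}\ddot{\mathcal{L}}_{\sigma}(\hat{\phi}_i)$, and (iii) quadratic coefficient $\tfrac{1}{2}\ddot{\mathcal{L}}_{\sigma}(\hat{\phi}_i)$. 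Summing over $i \in \xi_k \cap [n_{\text{train}}]$, the linear and quadratic coefficients aggregate exactly into $J(\xi_k)$ and $-\tfrac{1}{2} H(\xi_k)$, matching the definitions in the statement.

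Next, I would add the log-prior $\log \pi(\mu_{t,k}) = -\log(\sqrt{2\pi}\sigma_\mu) - (\mu_{t,k}-\mu_0)^2/(2\sigma_\mu^2)$ and combine with the approximate log-likelihood. The result is a quadratic form in $\mu_{t,k}$ with linear coefficient $J(\xi_k) + \mu_0/\sigma_\mu^2$ and negative-half-quadratic coefficient $H(\xi_k) + 1/\sigma_\mu^2$. Completing the square yields a Gaussian in $\mu_{t,k}$ with mean $(J(\xi_k)+\mu_0/\sigma_\mu^2)/(H(\xi_k)+1/\sigma_\mu^2)$ and precision $H(\xi_k) + 1/\sigma_\mu^2$, which immediately proves the posterior claim \eqref{eq:updateM}. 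Integrating this Gaussian over $\mu_{t,k}$ multiplies by $\sqrt{2\pi}/\sqrt{H(\xi_k)+1/\sigma_\mu^2}$ and leaves the completed-square constant $(J(\xi_k)+\mu_0/\sigma_\mu^2)^2 / [2(H(\xi_k)+1/\sigma_\mu^2)]$. Taking logarithms, the $\log\sqrt{2\pi}$ from the integral cancels with that in the prior normalizer, and collecting the surviving terms reproduces exactly the four summands displayed in \eqref{eq:def-m-xi}.

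For the conditional of $\sigma_\mu^2$, I would observe that conditional on $\{\calT_t,\calM_t\}_{t=1}^T$, the leaf weights are a sample of size $\sum_{t=1}^T \ell_t$ from $N(\mu_0,\sigma_\mu^2)$, and apply standard inverse-Gamma conjugacy against the prior $\text{inv-Gamma}(a/2, b/2)$: the shape updates by half the count and the rate updates by half the sum of squared deviations from $\mu_0$, yielding exactly the distribution in the statement.

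The main obstacle, such as it is, is purely bookkeeping: tracking the constant-in-$\mu_{t,k}$ terms carefully so that the $\log\sqrt{2\pi}$ factors cancel and the residual collapses to the compact form $-\log \sigma_\mu\sqrt{H(\xi_k)+1/\sigma_\mu^2}$ shown in \eqref{eq:def-m-xi}. There is no substantive analytic difficulty once the Taylor expansion is substituted, because conjugacy of the Gaussian prior with a Gaussian (approximate) likelihood is exact in closed form.
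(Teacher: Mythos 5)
Your proposal is correct and follows essentially the same route as the paper's own derivation in the supplement: substitute the second-order expansion into the integrand, collect the quadratic form in $\mu_{t,k}$ to identify $J(\xi_k)$ and $H(\xi_k)$, complete the square against the Gaussian prior to obtain both the closed-form $\log \hat{m}_{\xi_k}$ and the normal conditional for $\mu_{t,k}$, and finish with standard inverse-Gamma conjugacy for $\sigma_\mu^2$. The coefficient bookkeeping you describe (including the cancellation of the $\sqrt{2\pi}$ factors leaving $-\log\sigma_\mu\sqrt{H(\xi_k)+1/\sigma_\mu^2}$) matches the paper exactly.
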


\begin{proof}
See Supplementary Section~\ref{app_subsec:GSB-approxi}.     
\end{proof}

For the normal response model,~\eqref{eq:def-m-xi} holds exactly with $\dot{\mathcal{L}}_{\sigma}(\hat{\phi}_i) = (y_i-\hat{\phi}_i)/\sigma^2$, $\ddot{\mathcal{L}}_{\sigma}(\hat{\phi}_i) = - 1/\sigma^2$, $J(\xi)  = \sum_{i \in \xi \cap [n_{\text{train}}]} (y_i -  \hat{\phi}_{i} + \hat{\phi}_{t,i})/\sigma^2$ and $H(\xi)  =  | \xi \cap [n_{\text{train}}] |/\sigma^2.$ 
Based on this quadratic approximation, we can compute the log marginal likelihood ratios for the split and merge moves following the closed-forms given in \eqref{app_eq:split likelihood ratio} and \eqref{app_eq:merge likelihood ratio}. For example, when splitting a node $\xi$  into two offspring nodes $\left(\xi^*_{L}, \xi^*_{R} \right)$,  we only need to compute $\log \hat{m}_{\xi^*_{L}} +\log \hat{m}_{\xi^*_{R}}-\log \hat{m}_{\xi}$ to obtain the log marginal likelihood ratio between $\calT^*$ and $\calT$. 
  
\subsection{Informed Rejection-free MCMC for Posterior Sampling}\label{subsec:IIT}
We now introduce our Gibbs algorithm for sampling from the joint posterior distribution of $(\{\calT_t\}_{t=1}^T, \{\calM_t\}_{t=1}^T, \sigma_\mu, \sigma)$ outlined in Alg.~\ref{alg:GSB-procedure}.   
Let $N$ denote the number of Gibbs iterations. In each iteration, we use a deterministic-sweep scheme to draw posterior samples of $\{\calT_{t}, \calM_{t}\}_{t=1}^T$  sequentially.  
In particular, to efficiently sample a new $\calT_t$ from the conditional distribution $p(\calT_t | \vartheta_{-t})$, we develop an informed importance tempering (IIT) algorithm, a rejection-free MCMC scheme proposed by~\cite{zhou2022rapid} that extends the standard MH algorithm using ``informed proposals.'' It was shown in~\cite{zhou2022rapid}  that this algorithm may converge much faster than MH algorithms under certain scenarios and is especially useful for problems like variable selection.  
Fix an arbitrary $t$ and $\vartheta_t$, and let $q(\calT^{*}_t |  \calT_t )$ denote the probability of proposing a new tree $\calT^{*}_t$  given current $\calT_t$ according to the standard uninformed random-walk split-and-merge proposal (see Supplementary Section~\ref{app_subsec:GSB-prior} for details). Define the ``neighborhood" of $\calT_t$ as $\mathcal{N}(\calT_t)=\{\calT^{*}_t: q(\calT^{*}_t | \calT_t) > 0\}$, which includes all the possible split and merge proposals that can be applied to $\calT_t$. 
An informed proposal re-weights these neighboring decision trees in light of their posterior probabilities, 
and we assign to each $\calT^{*}_t \in \mathcal{N}(\calT_t)$ a proposal weight given by  
\begin{equation}\label{eq:def-weight}
	\eta(\calT^{*}_t | \calT_t, \vartheta_{-t})=q(\calT^{*}_t | \calT_t) h\left(\frac{ p(\calT^*_t  | \vartheta_{-t}) q(\calT_t | \calT^{*}_t)}{ p(\calT_t   | \vartheta_{-t})  q(\calT^{*}_t | \calT_t)}\right)
\end{equation}
where $h(\cdot)$ is the balancing function satisfying $h(x)=xh(1/x)$~\citep{zanella2020informed}, whose form is specified to guarantee the detailed balance condition of $\eta$ with respect to the conditional posterior. We adopt $h(x)=\sqrt{x}$ in this study. 
We set $\eta (\calT^{*}_t | \calT_t, \vartheta_{-t}) = 0$ if $\calT^{*}_t \notin \mathcal{N}({\calT}_t)$. 
The proposal probability of $\calT^{*}_t$, according to $\eta$, is given by $\eta(\calT^{*}_t | \calT_t, \vartheta_{-t} ) / Z(\calT_t, \vartheta_{-t})$, where 
\begin{equation}\label{eq:def-Z} 
Z(\calT_t, \vartheta_{-t})  = \sum_{\calT^{*}_t \in \mathcal{N}(\calT_t)} \eta\left(\calT^{*}_t | \calT_t, \vartheta_{-t}\right) 
\end{equation} 
is the normalizing constant. To avoid the low acceptance rate issue of MH algorithms, we always accept such an informed proposal, which results in a Markov chain with transition matrix  $\bm{P} (\calT, \calT^{*})=\eta(\calT^{*}_t | \calT_t, \vartheta_{-t} ) / Z(\calT_t, \vartheta_{-t})$ for each $\calT^{*}_t \in \mathcal{N}(\calT_t)$. 
After collecting the samples from $\bm{P}$, we correct for the bias by importance sampling, which is possible because $\bm{P}$ is reversible with respect to the distribution  $\tilde{p}(\calT_t   | \vartheta_{-t}) \propto  p(\calT_t   | \vartheta_{-t}) Z(\calT_t, \vartheta_{-t})$. Hence, we can simulate the Markov chain $\bm{P}$ using importance reweighted sampling to generate a new decision tree $\calT^*_t$ from the posterior $p(\calT_t   | \vartheta_{-t}) $; this procedure is summarized in Alg.~\ref{alg:one-weak-learner}. The details of the IIT algorithm and its validity are explained in Supplementary Section~\ref{app_subsec:GSB_val}.

\begin{minipage}{0.44\textwidth}
\begin{algorithm}[H]
    \centering
    \setstretch{0.5}
	\caption{GS-BART procedure}\label{alg:GSB-procedure}
	\begin{algorithmic} 
		{\small 
			\Require $\bfy$, $\{\dbmG_t\}_{t=1}^{T}, \dot{\mathcal{L}}_{\sigma}(\cdot), \ddot{\mathcal{L}}_{\sigma}(\cdot)$, $N$.   
                \Ensure $\{\hat{\phi}_i\}_{i=1}^{n_{\text{train}}}$.
			\For {j = 1, \ldots, N}
			\State Plug in $\left\{\hat{\phi}_i\right\}_{i = 1}^{n_{\text{train}}}$  and $\bfy$ to $\dot{\mathcal{L}}_{\sigma}(\cdot), \ddot{\mathcal{L}}_{\sigma}(\cdot)$.
			\State Initial $\calT_t$ to the root node.
			\For {t = 1, \ldots, T}
			\State Sample $\calT_t$ using IIT from the root node. 
			\State Sample $\calM_t$ from the posterior distribution $\pi(\calM_t | \calT_t, -)$ in \eqref{eq:updateM}.
   \State Update $\{\hat{\phi}_{t,i}\}_{i=1:n}$ from $g_{t}(\calT_t^{(j)}, \calM_t^{(j)})$ 
   \EndFor
			\State Update $ \hat{\phi}_i \leftarrow \sum_{t=1}^T \hat{\phi}_{t,i}$.
                \State Update $\sigma_{\mu}^2$ and $\sigma$
			\EndFor
		}
	\end{algorithmic}
\end{algorithm}
\end{minipage}
\hfill
\begin{minipage}{0.47\textwidth}
\begin{algorithm}[H]
    \setstretch{1}
    \centering
	\caption{
		IIT Algorithm to draw informed proposal samples for one weak learner in GS-BART}\label{alg:one-weak-learner}
	\begin{algorithmic} 
		{\small 
			\Require $\bfy$, $\calT_t$ with only root node, $\dbmG_t$, $\left\{\dot{\mathcal{L}}_{\sigma}(\hat{\phi}_{i}), \ddot{\mathcal{L}}_{\sigma}(\hat{\phi}_{i})\right\}_{i = 1}^{n}$, the number of informed proposal samples of ($\calT_t, \calM_t$) at each Gibbs iteration, $K$. 
			\For {$k = 1,...,K$}
			\State Calculate $\eta\left(\calT^{*}_t | \calT_t , \vartheta_{-t} \right)$  for all possible $\calT^{*}_t \in \mathcal{N}_{\calT}$ in \eqref{eq:def-weight}.
			\State Calculate $Z(\calT_t, \vartheta_{-t}) =  \sum_{\calT^{*}_t \in \mathcal{N}_{\calT_t}} \eta\left(\calT^{*}_t | \calT_t, \vartheta_{-t}\right)$.
			\State Randomly propose a $\calT^{*}_t$ w.p. $\frac{\eta(\calT^{*}_t | \calT_t)}{Z(\calT_t)}$.
			\State Set $\calT_t \leftarrow \calT^{*}_t$.
			\EndFor
		}
		\State Sample $\calM_t$ from \eqref{eq:updateM}.
	\end{algorithmic}
\end{algorithm}
\end{minipage}

\subsection{Recursive Computation of Marginal Likelihood Ratios} \label{subsubsec:splitmove}

In each iteration of Alg.~\ref{alg:one-weak-learner}, the informed proposal requires the evaluation of the weighting function $\eta$ and its normalizing constant, which involves the marginal likelihood ratios for every possible split and merge move at $\calT_t$.  
Since a split may occur at any splittable leaf node according to any valid decision edge of a candidate arborescence, the number of possible split moves increases rapidly for large datasets, and consequently, the naive computation of the marginal likelihood ratio for each possible move becomes impractical. 
Taking advantage of the ordered tree structure of arborescences, we design an efficient parallel and recursive algorithm for calculating the marginal likelihood ratio for each possible split. 
For notation simplicity, we assume the tree index $t$ is fixed and drop it from $\calT_t$ and $\dbmG_t$ in this section. 
Before presenting the algorithm details, we first define the order of edges of an arborescence and explain what constitutes a valid split move. 

\medskip 
\noindent\textbf{Order of arboresencence edges}.
Figure \ref{fig:vertex edge type}(A) shows an example of the vertex types and edge orders of a tree arborescence.
An edge $\de_v\coloneqq(v,\operatorname{pa}(v)) \in \dE$ is from a child vertex $v$ to its unique parent vertex $\operatorname{pa}(v)$, and hence uniquely determined by $v$. We call $v$ a \emph{root} vertex if it does not have a parent and denote it as $v_0$.  We call $v$ a \emph{bottom} vertex and $e_v$ a \emph{bottom} edge if $v$ does not have any children, and we use $B(\dG)$ to denote the set of bottom vertices of a graph $\dG$. Every vertex except the \emph{bottom} vertices has a non-empty set of child vertices, denoted as  $\operatorname{ch}(v)$. 
Given $\dG$, we can order the directed edges in $\dE$ in a bottom-up direction by tracing the parent of each child vertex of $\de_v$ from the bottom to root vertices. Meanwhile, by utilizing such bottom-up order, we can also define the ancestor and descendant vertices of a vertex $v$, denoted as $\operatorname{ances}(v)$ and $\operatorname{desc}(v)$, respectively. 

\medskip 
\noindent\textbf{Valid split moves}.
Given the candidate graphs set $\dbmG$ and the current decision tree $\calT$, a \emph{valid} candidate split move is defined by identifying a splittable leaf node $\xi$ and a \emph{valid} graph split decision rule consisting of a decision graph and a decision edge, ensuring that the training data at $\xi$ is bipartitioned into two offspring nodes containing nonempty training data. 
Some valid decision edges may result in the same bipartition of training data and are thus grouped as a set of equivalent decision edges. 
Within each equivalent edge set, we select one edge such as the bottom one to be valid, and all the other edges are regarded as redundant decision edges. In Supplementary Alg.~\ref{app_alg:recur split}, we demonstrate the method of determining edge type where we use `1', `-1', and `0' to denote valid, invalid, and redundant edges. We set the log marginal likelihood ratio of invalid and redundant split moves as negative infinity. 
Note the determination of edge types depends on the splittable leaf node.  The details are described in Supplementary Section \ref{app_subsec:redund-egde}. 
Figure \ref{fig:vertex edge type}(B) shows an example of how the types of decision edges of two candidate graphs are updated as the decision tree grows. 

Identifying valid split moves offers several potential benefits: it ensures the validity of data bipartition, improves the accuracy of tree structure prior and proposal probability calculations (see Supplementary Section \ref{app_subsec:GSB-prior}), and saves memory space for storing marginal likelihoods of all possible moves. Additionally, we shall show that identifying valid decision edges does not increase the computational burden, as it can be simultaneously achieved during the recursive split log marginal likelihood ratio calculations.

\medskip
\noindent\textbf{Computations of split marginal likelihood ratios.} 
At the root node, for each edge of an arborescence candidate graph $\dG=(V, \dE) \in \dbmG$, we determine whether it yields a valid split move and evaluate the split loglikelihood ratio for each valid split. We sample a valid split move following Alg.~\ref{alg:one-weak-learner} and update the cluster membership accordingly. An example is demonstrated in Figure \ref{fig:vertex edge type}($b_1$). However, when decision tree further grows,  each vertex of an arborescence could be associated with data instances belonging to different leaf nodes, and some edges of a candidate arborescence could become invalid with respect to data instances of each leaf node. 
See Figures \ref{fig:vertex edge type} ($b_2$) and ($b_3$) for an example.
Naively reconstructing candidate arborescences for data instances of each leaf node and evaluating possible splits on these new candidate arborescences are computationally cumbersome. To facilitate computation,  we keep the graph structures of candidate arborecences intact for each $\calT$, updating only the cluster memberships of data instances of each vertex and edge types for each splittable leaf node when updating $\calT$.     


\begin{figure}[h]
    \centering
    \includegraphics[width=6in,height=3in]{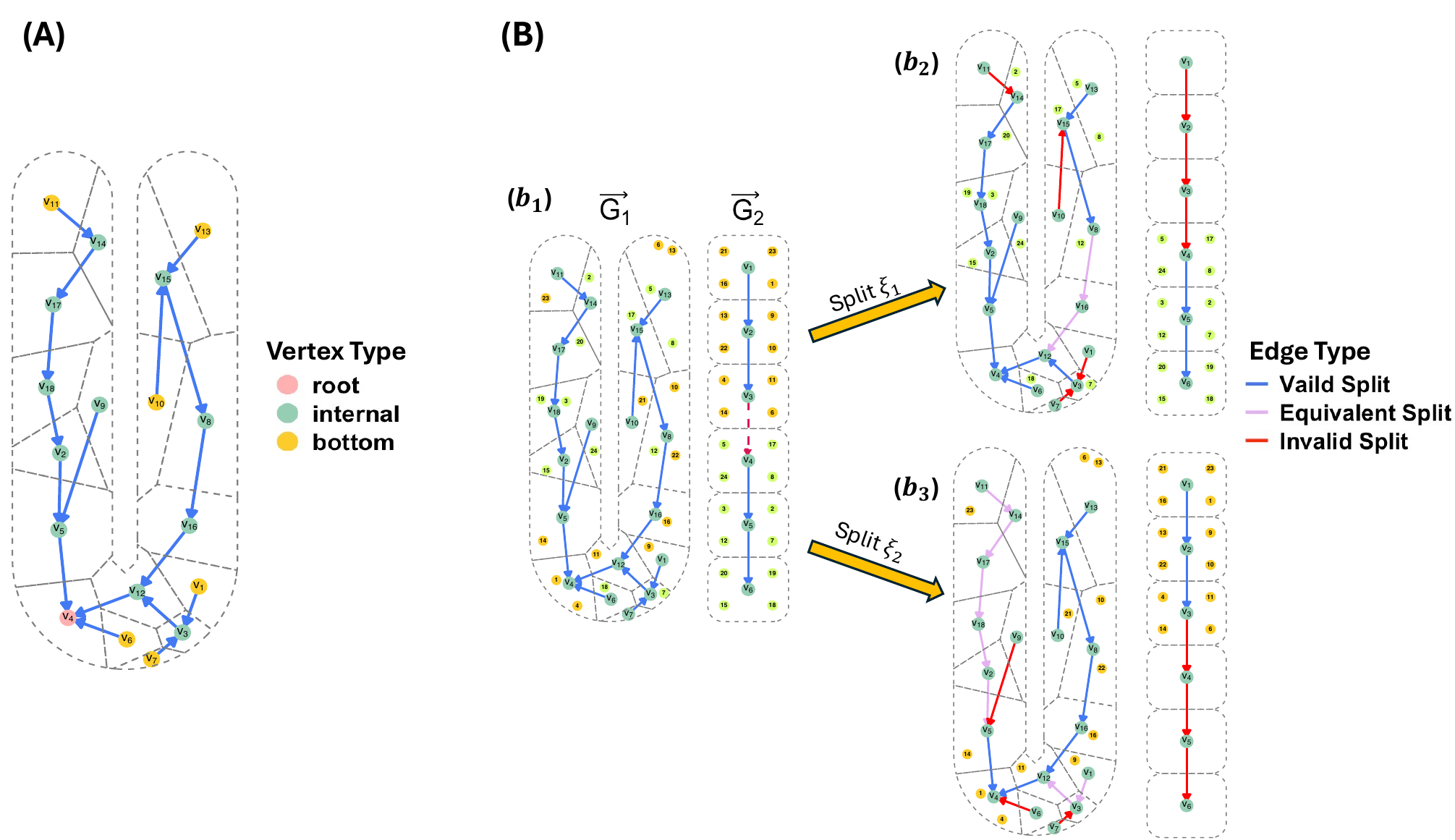}
    \caption{(A): An example of the ordered directed edges. Different color denotes different types of vertices; (B): (b1) shows a decision tree after splitting the root node into two leaf nodes $\xi_1$ (lime dots) and $\xi_2$ (orange dots) by the edge in $\protect\dG_2$ marked by the red dashed line; 
    (b2) and (b3) show the edge types when $\xi_1$ or $\xi_2$ is selected for splitting, respectively.} 
    \setlength{\belowcaptionskip}{-50pt}
    \label{fig:vertex edge type}
\end{figure}

Suppose the current decision tree $\calT$ has $\ell$ leaf nodes, denoted as $\underline{\xi} = (\xi_1, \ldots, \xi_{\ell})$. The training data associated with each vertex $v$ of $\dG$ are grouped according to their leaf node memberships, denoted as  $(v \cap \xi_1 \cap [n_{\text{train}}], \ldots, v \cap \xi_{\ell} \cap [n_{\text{train}}] )$. Suppose we choose $\de_v$ from $\dG$ as a candidate decision edge to split a leaf node $\xi_k$ in $\underline{\xi}$. Disconnecting $\de_v$ from $\dG$ would result in a bipartition of $\dG$ into two sub-arborescences. We refer to the sub-arborescence consisting of $v\cup \operatorname{desc}(v)$ as the right arborescence, and the other as the left arborescence. Let $\xi_{k, R, v}^{*}$ denote the right offspring node after splitting $\xi_k$, and define $\underline{\xi}^{*}_{R, v} = (\xi^{*}_{1, R, v}, \ldots, \xi^{*}_{\ell, R, v})$. 
We define $\underline{\xi}^{*}_{L, v}$
in a similar way. 
The log marginal likelihood ratio for each split move provided in Lemma 1 and \eqref{app_eq:split likelihood ratio} in Appendix involves the computations of $J(\xi_k)$, $H(\xi_k)$, $J(\xi^*_{k,R,v})$, $J(\xi^*_{k,L,v})$, $H(\xi^*_{k,R,v})$, and $H(\xi^*_{k,L,v})$. 
We first compute the gradient statistics $J$ and $H$ at each vertex and leaf node and define two vectors of length $\ell$, denoted as $ \underline{J}(v) = (J^{1}(v), \ldots, J^{\ell}(v)) $ and $\underline{H}(v) = (H^{1}(v), \ldots, H^{\ell}(v))$, respectively, where
\begin{equation}\label{eq:JH-v}
\begin{aligned}
J^{k}(v) = \sum_{i \in v \cap \xi_{k} \cap [n_{\text{train}}] } \Bigl(\dot{\mathcal{L}}_{\sigma}(\hat{\phi}_{i}) - \ddot{\mathcal{L}}_{\sigma}(\hat{\phi}_{i}) \Bigl), 
H^{k}(v) = -\sum_{i \in v \cap \xi_{k} \cap [n_{\text{train}}]} \ddot{\mathcal{L}}_{\sigma}\left(\hat{\phi}_{i}\right),   k = 1, \ldots, \ell
\end{aligned}
\end{equation}

We then compute the vector of leaf-node level gradient statistics, $\underline{J}(\underline{\xi})$ and $\underline{H}(\underline{\xi})$, by 
\begin{equation}\label{eq:JH-overall}
\begin{aligned}
\underline{J}(\underline{\xi}) = \sum_{v \in V} \underline{J}(v), \quad \underline{H}(\underline{\xi}) = \sum_{v \in V}  \underline{H}(v). 
\end{aligned}
\end{equation}

When $\de_v$ is a bottom edge, the right arboresence only contains one vertex $v$, and hence $\underline{J}(\underline{\xi}_{R, v}^{*})=\underline{J}(v)$, $\underline{H}(\underline{\xi}_{R, v}^{*})=\underline{H}(v)$,  $J(\xi_{L, v}^{*})=\underline{J}(\underline{\xi})-\underline{J}(v)$, and $H(\xi_{L, v}^{*})=\underline{H}(\underline{\xi})-\underline{H}(v)$, which can be computed following the expressions in \eqref{eq:JH-v} and \eqref{eq:JH-overall}. 
Beginning from these bottom edges, $\underline{J}$ and $\underline{H}$ associated with the right offspring node $\underline{\xi}^*_{R, v}$ follow a recursive form as:
\begin{equation}\label{eq:JH-root-r}
\begin{aligned}
\underline{J}\left(\underline{\xi}_{R, v}^{*}\right)  = \underline{J}(v) +\sum_{v \in \operatorname{desc}\left(v\right)} \underline{J}(v), \quad
\underline{H}\left(\underline{\xi}_{R, v}^{*}\right)  = \underline{H}(v) + \sum_{v \in \operatorname{desc}\left(v\right)} \underline{H}(v).
\end{aligned}
\end{equation}

The algorithm visits all edges and passes the gradient information from the leaf to the root vertex $v$ recursively to obtain $\underline{\xi}_{R, v}^{*} \cap [n_{\text{train}}]=(\xi_{1,R, v}^{*} \cap [n_{\text{train}}],\ldots, \xi_{\ell, R, v}^{*} \cap [n_{\text{train}}])$, $\underline{J}\left(\underline{\xi}_{R, v}^{*}\right)$ and
$\underline{H}\left(\underline{\xi}_{R, v}^{*}\right)$. 
 $\underline{J}\left(\underline{\xi}_{L, v}^{*}\right)$ and $\underline{H}\left(\underline{\xi}_{L, v}^{*}\right)$ are then easily computed by $\underline{J}\left(\underline{\xi}_{R, v}^{*}\right)=\underline{J}\left(\underline{\xi}\right)-\underline{J}\left(\underline{\xi}_{L, v}^{*}\right)$ and $\underline{H}\left(\underline{\xi}_{R, v}^{*}\right)=\underline{H}\left(\underline{\xi}\right)-\underline{H}\left(\underline{\xi}_{L, v}^{*}\right)$ . 
This algorithm also allows us to determine whether $\de_v$ is a valid decision edge with regard to a leaf node $\xi_k$ by checking whether both $ | \xi_{k, R, v}^{*} \cap [n_{\text{train}}] | $ and $ |\xi_k \cap [n_{\text{train}}]|- |\xi_{k, R, v}^{*}\cap [n_{\text{train}}]|  $ are greater than zero and whether $\de_v$ is redundant. 

Supplementary Alg.S1 presents the detailed algorithm to compute the log marginal likelihood ratios of all possible splits on a single arborescence. 
Note that such computation needs to be repeated for each candidate arborescence in the candidate graphs set $\dbmG$. We can perform a parallel computation over arborescences in $\dbmG$ to further reduce computations. 

After obtaining the log marginal likelihood ratios for every valid split move, we combine them with the log marginal likelihood ratios for every possible merge, as well as the prior ratios and proposal probabilities to calculate the weight function $\eta$ in the IIT algorithm. We provide a detailed example of applying Alg.S1 in Supplementary Figure~\ref{app_fig:alg demo}. 

\begin{remark} 
We only use the selected decision edge to form the bipartition of samples. We do not remove it from the selected arborescence $\dG$, as breaking the connectivity of $\dG$ would cause complications in computing the marginal likelihood ratio and updating cluster memberships for subsequent splits.  We explain the issue in Supplementary Section~\ref{app_subsec:graph-int}. 
\end{remark}

\medskip
\noindent\textbf{Computational complexity.}
In GS-BART, each informed split proposal requires evaluating likelihoods for up to $2\cdot p \cdot (|V|-1)$ candidate splits across the two newly created child nodes after split moves, giving a computational complexity of $\Theta(|V|)$ per proposal if $|V|\gg p$. Merge moves have the same order. Parallelization across $N_{\mathrm{core}}$ cores reduces cost by a factor of $N_{\mathrm{core}}$. By comparison, BART split moves involve choosing a leaf, a variable, and a cut point, then reassigning data, which also yields $\Theta(|V|)$ cost. 

\subsection{Bayesian Function  and Posterior Inference and Prediction}\label{subsec:prediction}

\noindent\textbf{Prediction.}
When constructing a candidate graph, each data point, training or testing, is assigned to a vertex based on the feature information.  During the Bayesian sampling of the GS-BART model parameters, we recursively split both training and testing data into decision tree leaf nodes. We then use the corresponding posterior samples of leaf weights from each weak learner to obtain predictive samples of $\phi_{\text{test}}$ for each testing data and other predictive quantities of interest for the generalized response model. 

\noindent\textbf{Variable importance and partial feature effects.}
We count the frequency with which candidate graphs constructed from each feature are selected to form decision rules. We use this frequency as a metric for evaluating variable importance, as it reflects the dependency between the feature and the response variable. Moreover, although the regression mean function of GS-BART lacks a closed functional form, the posterior samples of the additive trees enable visualization of the partial effects of features of interest to investigate their relationships with the response variable. For a particular set of features of interest, we draw posterior samples of regression mean functions over a range of values for the selected feature(s), while fixing other features at specific values or averaging them out. 

\section{Experiments}\label{sec:experi}

\textbf{Data response models.}  We evaluate the performance of GS-BART on synthetic and real datasets with three response types: normal regressions for continuous responses, multinomial models for the classification of categorical responses, and the variance model for count data responses which assumes $Y_i  \sim \text{Normal}\left\{ \phi_i, \phi_i\right\}$ following \cite{linero2024generalized}. The details of each response model, including the expressions of $\dot{\mathcal{L}}_{\sigma}(\phi)$ and $\ddot{\mathcal{L}}_{\sigma}(\phi)$, the prior hyperparameter settings, and the initial values, are given in Supplementary Section~\ref{app_sec:exp-setting}.

\noindent\textbf{Competing methods.}   
We compare GS-BART with (1) \textbf{XGBoost} using the \textit{caret} package in R; (2) \textbf{BART} using the \textit{BART} package for continuous and classification datasets and \textit{FlexBartVar} package \citep{linero2024generalized} for count datasets; (3) \textbf{LGP} using the latent nearest neighbor Gaussian Process regression in the \textit{BRISC} package~\citep{saha2018brisc} for continuous data and \textit{glmmfields} package~\citep{anderson2019black} for count data; (4) \textbf{RF-GLS} using the spatial random forest method in the \textit{RandomForestsGLS} package~\citep{saha2023random} for continuous datasets; (5)
\textbf{XBART} in the \textit{stochtree} package ~\citep{he2023stochastic} for continuous and classification datasets;
(6) \textbf{flexBART} package \citep{deshpande2024flexbart} for continuous datasets.
(7) \textbf{AddiVortes} in~\cite{stone2024addivortes} for continuous datasets. We report results for RF-GLS and flexBART only on simulation datasets, due to numerical instabilities with Cholesky decompositions and excessive runtimes on real datasets, respectively.  

\noindent\textbf{Simulation datasets.}
We generate synthetic datasets from the model in \eqref{eq:def-likelihood} for each of the three response types and for each of the 2-D U-shape and 3-D bitten Torus domains, respectively.  We assume the true latent regression functions depend on spatially structured features and some unstructured numerical features. To have a fair comparision with XGBoost, BART, and XBART, we also consider a normal response model using the Friedman function (Friedman, 1991) only involving unstructured features on Euclidean domains as the true regression mean function. For the U-shape and Torus examples, we include several irrelevant unstructured numerical features.  The details of the simulation data generations are deferred to Supplementary Section \ref{app_sec:data-intro}.  

\begin{table}[h]
    \begin{center}
    \small
    \begin{threeparttable}
    
    \begin{tabular}{lcccc}
    \toprule[1pt]
    \multicolumn{5}{c}{\textbf{CONTINUOUS:} MSPE $\times 10$ (sd $\times 10$)} \\
    \hline
    \textbf{DATASETS} & \textbf{GS-BART} & \textbf{BART} & \textbf{XGBOOST} & \textbf{LGP} \\
    \hline
    U-SHAPE ($n = 800$, $p = 5$) & \textbf{0.82} (0.22) & 3.51 (0.83) & 8.17 (1.14) &  3.61 (0.15)      \\
    TORUS  ($n = 800$, $p = 5$)  & \textbf{10.20} (2.78) & 21.37 (3.70) & 16.77 (1.48) &  24.80 (0.24)  \\
    FRIEDMAN ($n = 400$, $p = 5$)  & \textbf{13.31} (1.95) & 16.27 (2.41) & - &  -  \\
    \hline 
     &  \textbf{RF-GLS} & \textbf{XBART} & \textbf{flexBART} & \textbf{AddiVortes}   \\
    \hline
    U-SHAPE ($n = 800$, $p = 5$)  & 2.87 (0.15) & 3.75 (0.61)  & 1.29 (0.26) & 2.14 (0.62)\\
    TORUS ($n = 800$, $p = 5$)  & 19.12 (0.91)& 18.50 (1.45)  & 16.22 (3.32) & 12.15 (2.22)  \\
    FRIEDMAN ($n = 400$, $p = 5$)  & - & 28.14 (3.29)  & 17.11 (2.31) & 27.71 (3.73) \\
    \hline
    \multicolumn{5}{c}{\textbf{COUNT:} RMSPE (sd)} \\
    \hline
    \textbf{DATASETS} & \textbf{GS-BART} & \textbf{BART} & \textbf{XGBOOST} & \textbf{LGP}   \\
    \hline
    U-SHAPE ($n = 800$, $p = 5$) &  \textbf{2.23} (0.25) & 3.48 (0.56) & 3.61 (0.48)  & 4.86 (0.43)   \\
    TORUS ($n = 800$, $p = 5$) & \textbf{1.94} (0.15) & 1.98 (0.17) & 2.72 (0.23) & 3.61 (0.24)  \\
    \hline
    \multicolumn{5}{c}{\textbf{CLASSIFICATION:} ACCURACY \% (sd $\times  10$)} \\
    \hline
    \textbf{DATASETS} & \textbf{GS-BART} & \textbf{BART} & \textbf{XGBOOST}  & \textbf{XBART} \\
    \hline
    U-SHAPE ($n = 800$, $p = 5$) & 91.47 (0.16) & 87.40 (0.15) & \textbf{91.62} (0.17) &  89.58 (0.16) \\
    TORUS ($n = 800$, $p = 5$) &  \textbf{80.68} (0.18) & 78.45 (0.15)  & 80.47 (0.16) & 80.33 (0.18) \\
    \bottomrule[1pt]
    \end{tabular}
    \setlength{\belowcaptionskip}{-25pt}
    \caption{Average out-of-sample prediction performance metrics with standard errors (in parentheses) across 50 repeated experiments for each synthetic data example. 
    `-' denotes N$/$A.} \label{Tab:SimulationDataRes}
    
    \end{threeparttable}
    \end{center}
\end{table}

\noindent\textbf{Real datasets.} 
We consider real datasets with different response and feature types. 
We use two datasets from the GeoDa Data and Lab \citep{anselin2009geoda}. The first is the New York City income \& education $(n = 1690, p=5)$. We take the logarithm of income as the response and fit the model using latitude and longitude as spatially structured features and $p$ social demographic covariates including education level as the unstructured features. 
The second data is the home sales data $(n = 20149, p=18)$ in King County, Seattle, Washington, where we treat the logarithm of the house sales as response and $p$ covariates as the unstructured features in addition to spatial features. 
For network feature data, we consider the Cora citation network data with $n =  2780$ nodes, each representing a scientific publication, and 5429 edges, representing citation relationships between papers. We also analyze four other county-level real datasets, treating the county adjacency graph as a network feature. 
We model the logit-transformed proportion of county-level Republican voters in the 2016 United States presidential elections $(|V|=3071, |E|=8669, p=55)$ by a normal response model. Additionally, we model the county-level count of all cancer types $(|V|=3103, |E|=8728, p=27)$ by the variance model for count response using population divided by 200,000 as an offset. For classification problems, we consider two county-level binary responses, the presence/absence of unhealthy air quality days in 2023 and flood in 2016, with $(|V| = 893, |E| = 8754, p= 27)$ and $(|V|=3037, |E|=11601, p=27)$, respectively. The difference in $|V|$ in these county-level data is due to missing values in some response variables. More details are provided in Supplementary Section \ref{app_sec:data-intro}.

\begin{table}[h]
    \small
    \begin{center}
    \begin{threeparttable}
    \begin{tabular}{lcccccc}
    \toprule[1pt]
    DATASETS & \textbf{GS-BART} & \textbf{XGBOOST}  & \textbf{LGP}  & \textbf{XBART} & \textbf{AddiVortes} & \textbf{BART} \\
        \hline & \multicolumn{5}{c}{ \textbf{CONTINUOUS}, MSPE/MSPE(BART)} & \\
        \hline NYC Education & $\mathbf{0.85}$ & 0.96 & 1.24 & 1.54 & 0.96 & \fbox{0.067}  \\
        US Election        & $\mathbf{0.76}$ & 0.98 & 2.08 & 1.22 & 0.88 & \fbox{0.075} \\
        \multirow{2}{*}{\shortstack[l]{KC House \\ (Runtime)}}  
        & $\mathbf{0.34}$ & 0.99 & 1.00  & 1.00 & 1.01 & \fbox{0.276}\\ 
        &  (841)  & (1339)  & (78) & (19) & (3807)& (475)\\ 
        \hline & \multicolumn{5}{c}{ \textbf{COUNT}, RMSPE/RMPE(BART)} & \\
        \hline  \multirow{2}{*}{\shortstack[l]{US Cancer \\ (Runtime)}} 
        & $\mathbf{0.71}$ & 0.82 & 1.27 & - & -  & \fbox{32.30} \\
        & (190)  & (103) & (4$\times 10^4)$ & - & - & (201)\\ 
        \hline & \multicolumn{5}{c}{ \textbf{CLASSIFICATION}, ACCURACY (\%)}& \\
        \hline Air Pollution & 81.81  & 78.97 & 80.68 & 79.55 & -     & $\mathbf{82.95}$  \\ 
        Flood  & 61.94 & 62.11 & 59.31 & $\mathbf{63.76}$ & -     & 58.98 \\ 
        Cora Network & $\mathbf{77.98}$ &  75.76 & - & 73.74 & -     & 70.71 \\
    \bottomrule[1pt]
    \end{tabular}\setlength{\belowcaptionskip}{-20pt}
    \caption{Out-of-sample prediction performance on real datasets. For continuous and count outcomes, metrics are reported as ratios relative to BART (with the BART column showing the original values). Runtimes (seconds) are reported for the King County House ($n=20419,p=18$) and US Cancer Count ($n=3103,p=27$) data.}
    \label{Tab:Real data res ratio}
    \end{threeparttable}
    \end{center}
\end{table}

\noindent\textbf{Experiment settings.}  
We use the method introduced in Example 3 of Section~\ref{subsec:candidategraph} to construct candidate graphs for the U-shape, Torus, and King County House datasets, and the method in Example 2 for the other datasets. For each data example, unless otherwise specified, we set the number of trees to 50 across all decision tree methods. GS-BART and XBART are run for 215 iterations with a 15-iteration burn-in, while BART is run for 6000 iterations with a 2000-iteration burn-in. Supplementary Section~\ref{app_subsec:GSB-mixing} shows that IIT substantially improves the mixing of GS-BART compared to the standard random-walk split-merge sampler in BART, resulting in a much higher effective sample size and requiring fewer iterations and burn-in. To evaluate and compare the out-of-sample prediction performance of the methods, we use an 80/20 training and testing split to evaluate and compare the out-of-sample prediction performance of the methods. We adopt mean squared error, root mean squared error, and prediction accuracy as evaluation metrics for continuous, count, and classification datasets, respectively.

\noindent\textbf{Results.}
Tables \ref{Tab:SimulationDataRes} and \ref{Tab:Real data res ratio} show the out-of-sample prediction results of different methods on synthetic and real datasets, respectively. Overall, GS-BART outperforms other competing methods on most synthetic and real datasets, especially those with continuous and count response data. In classification datasets, the prediction accuracy of GS-BART is slightly better than or comparable to other methods. In Friedman's example which only involves unstructured features, the decision tree prior model of GS-BART is equivalent to BART and XBART, but we still observe an improved performance of GS-BART, potentially due to the better mixing of IIT. 
These results demonstrate the consistent superiority and generalizability of GS-BART.
In contrast, many other competing methods are only applicable to some particular response models or exhibit numerical instability. Table \ref{Tab:Real data res ratio} also reports the computational runtimes of each method on two large real datasets. All experiments are conducted on an Apple M1 Pro chip using 9 cores for GS-BART. The computation of GS-BART is modestly slower than that of standard BART for normal regressions and is slightly faster than the generalized BART~\citep{linero2024generalized} for count regressions, and its overall computational cost is affordable for moderate to large-scale real datasets.

It is noticeable that GS-BART is particularly suitable for analyzing response data that are highly dependent on graph structures or lie on complex domains, as shown in the U-shape, Torus, and King County House examples. As an example, we examine the prediction surfaces of the U-shape data in Figure \ref{fig:Ushape results}(A). The U-shape response data is generated using a true piecewise smooth function where the discontinuity circular boundary separates the U-shape into three subregions, including two arm areas and one circle. 
Axis-parallel artifacts are clearly observed around the true discontinuity circular boundary in BART. 
Although such artifacts do not appear in LGP and RF-GLS, it is obvious that these two Gaussian process based approaches produce overly smoothed predictions near the circular boundary and fail to capture the abrupt changes of the true function.  In contrast, the GS-BART method successfully captures the nonlinear discontinuity boundaries and varying smoothness of the true function. Moreover, the other competing methods give poorer predictions than GS-BART at some locations near the U-shape domain boundary between the two arms. For BART, the issue is caused by their axis-parallel split rule, which forces some boundary points in different arms to be in the same cluster. For LGP and RF-GLS, the problem is partly due to the use of Euclidean distance-based covariance functions, which impose undesirable smoothness between boundary points that are close in Euclidean space but actually belong to two different arms. 

\begin{figure}[h]
    \centering
    \includegraphics[scale=.62]{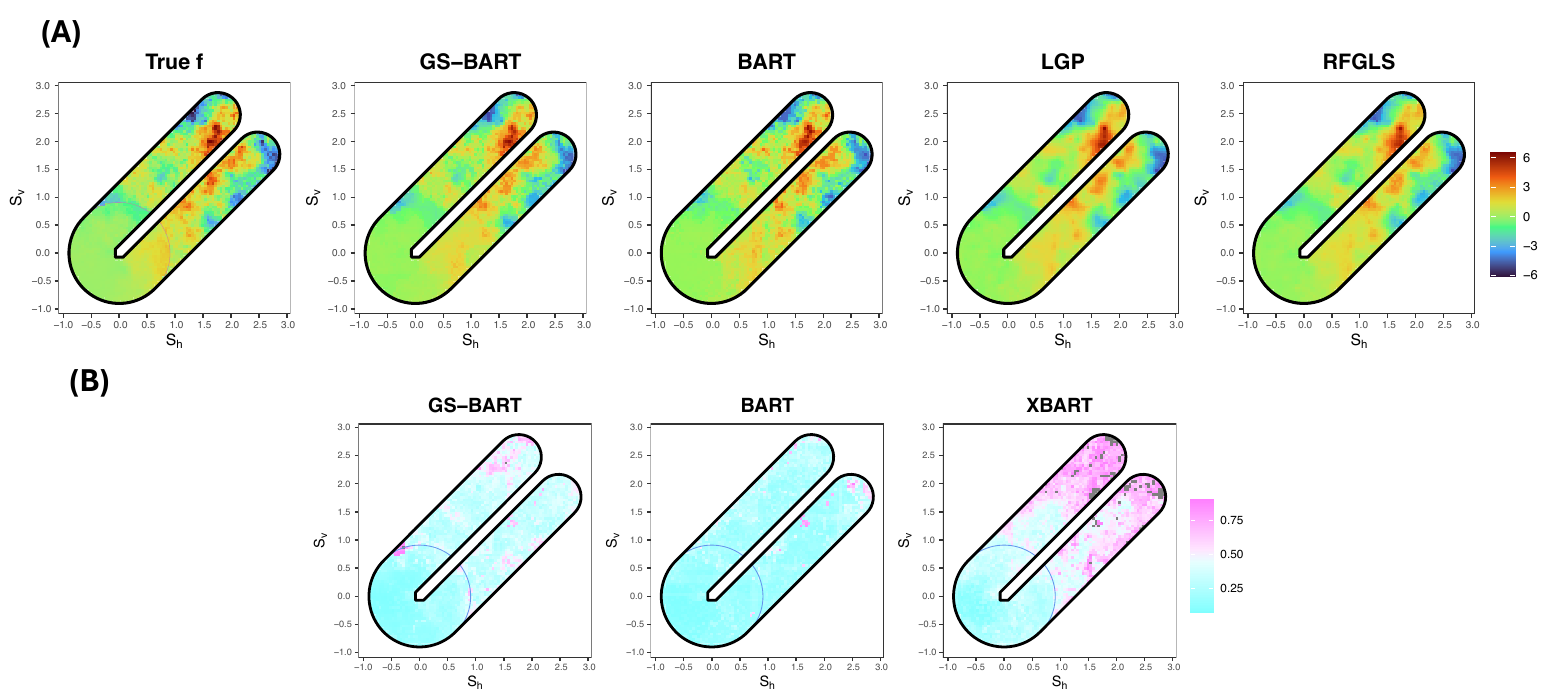}
    \setlength{\belowcaptionskip}{-10pt}
    \setlength{\abovecaptionskip}{-10pt}
    \caption{(A) True and predictive spatial surface plots of $\phi(\textbf{s}, \textbf{x})$ for each method using one simulated normal response U-shape dataset with $n=800$, $p=2$ and $\sigma=0.15$. 
    The discontinuity in the true function is marked by circular lines. 
    (B) Posterior predictive standard deviation surface plots of GS-BART, BART and XBART for the same U-shape example. Standard deviations larger than 0.9 are marked by gray. 
    }
    \label{fig:Ushape results}
\end{figure}

To further investigate the performance of posterior uncertainty quantification of GS-BART and other competing Bayesian additive regression tree-based approaches, we present the predictive standard deviation surface maps in Figure \ref{fig:Ushape results}(B). We also find the 95\% highest posterior density interval (HDI) at each location and compute the coverage rate and average interval length.  
Among the three methods, BART seems to underestimate the posterior uncertainty, potentially due to the poor mixing of the standard random-walk split-merge-type sampler; it has the shortest HDI length of 0.60 but only achieves a relatively low coverage rate of  $58.2\%$. In contrast, XBART shows high posterior predictive standard deviation in the two arm regions of the U-shape, with an average HDI length of 1.78 and a coverage rate of $93.6\%$. GS-BART, however, provides uncertainty estimates that match well with the truth, showing high posterior predictive standard deviation near discontinuities and in regions where the true function fluctuates significantly. It achieves a $90.2\%$ coverage rate with a much shorter HDI length of 0.97 compared with XBART.

Similar to other additive decision tree methods, GS-BART can be utilized to evaluate feature importance and interaction among features as mentioned in Section~\ref{subsec:prediction}. For instance, in the U-shape example where the true function only depends on $\bfs$ and $x_1$, spatial features and $x_1$ exhibit the highest and second highest variable importance estimated from GS-BART, respectively; 84\% of the splits have their decision rules determined by the candidate graphs constructed from $\bfs$ and $x_1$, indicating that a significant portion of the true function's variation can be attributed to $\bfs$ and $x_1$.  

We examine the performance of GS-BART on the U-shape example while varying the number of weak learners, $T$, the number of arborescences generated by structured feature, $M$, and the number of vertices $|V|$ for each arborescence. 
Overall, GS-BART showed fairly robust performance to these tuning parameters as long as they are sufficiently large.   
We notice that increasing $T$ and $M$ generally improves the prediction performance for a fixed $|V|$. However, beyond a certain threshold, there appear to be diminishing or negative returns in prediction accuracy. Increasing $|V|$ generally improves prediction accuracy but at the expense of having a higher computational cost. Further increasing it does not result in substantial additional improvements once it is relatively large.
More details about the sensitivity analysis are described in Supplementary \ref{app_subsec:Ushape-add}. 

To illustrate the utility of GS-BART for investigating the functional relationship between features and response data, we compare the partial dependence plots showing the spatial feature effect on the King County log house sales price under BART and GS-BART in Figure \ref{fig:spatial partial dependence plot}.   
Axis-parallel discontinuities can be clearly observed in the estimated spatial partial effects from BART. In contrast, the result of GS-BART exhibits spatial patterns with more flexible discontinuity boundaries and meaningful interpretations. For instance, the observed house sales price decline around the Washington 203 highway from both methods. However, this highway passes through only the top right part of our study area in a vertical direction but the estimated function discontinuity of BART extends vertically to the entire area and causes some artifacts in the upper region without the highway. 
Additionally, houses around lakes usually have higher sale prices. GS-BART successfully detects such a neighborhood around Lake Sawyer where the spatial effect is higher than the surrounding areas, a pattern overlooked by BART. 
Besides, the spatial effects around the University District are relatively high for both methods. However, the result of GS-BART exhibits more detailed spatial distribution patterns that BART does not capture. Specifically, GS-BART captures the pattern that the  spatial effect in the areas near Interstate 5 tends to decrease in the University District, while the result of BART shows little variation in this region. 
We also show the partial dependence plot examining the effect of living area on log house sales price in Supplementary Section~\ref{app_subsec:KingHouse-add}.

\begin{figure}[ht]
    \centering
    \includegraphics[scale=.6]{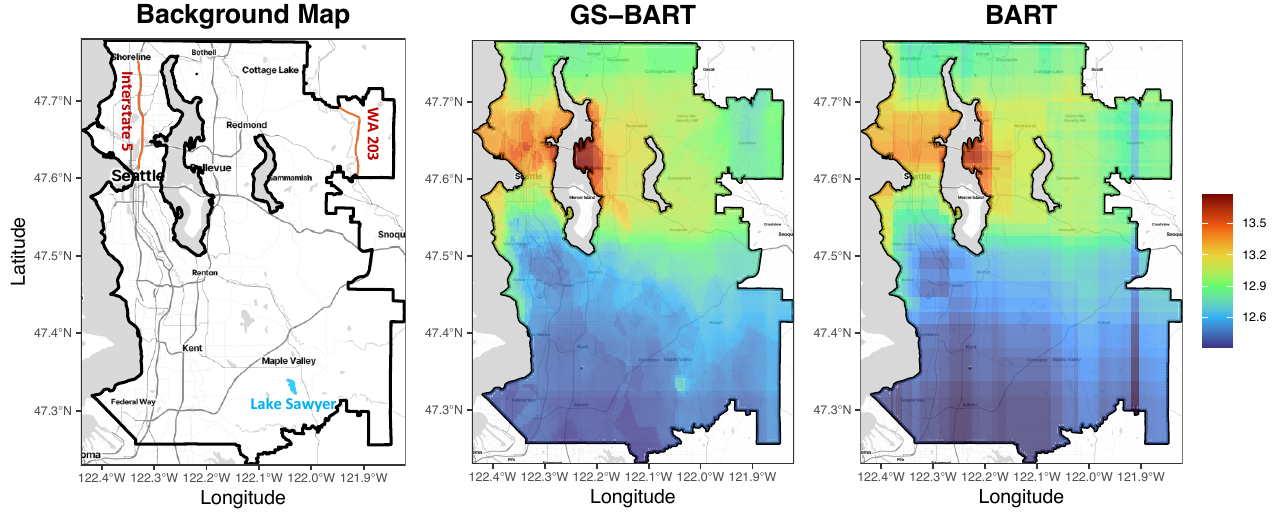}
    \setlength{\belowcaptionskip}{-10pt}
    \caption{The partial dependence plots showing spatial effects on log house sales price for GS-BART and BART.}
    \label{fig:spatial partial dependence plot}
\end{figure}

\section{Conclusion}\label{sec:conclusion}
We propose an extension of BART to model generalized response data with graph-structured features.  We encode feature information by candidate graphs and incorporate them in additive tree models through a graph split decision rule. The method is particularly well-suited for analyzing spatial and network data, especially those with complex geometries, as evidenced by its superior performance in our extensive numerical examples. It offers several advantages over existing spatial and network regression models, including the ability to capture both varying smoothness and abrupt changes in the regression function, accommodate high-dimensional covariates, handle various response types, and capture nonlinear partial effects of features and their interactions. 
 
Moving forward, there are several avenues for future research. Currently, our research assumes that the manifold structure or graph relations among data are predefined.  One promising area of research is to explore methods for learning unknown manifolds or graphs directly from the observed high-dimensional covariates~\citep[see, e.g.,][]{arvanitidis2019fast,niu2023intrinsic}. This would greatly expand the applicability of GS-BART in regression problems where covariates have intrinsic but unknown multivariate or graph relations.    
GS-BART offers a nonparametric method to model latent functions involving graph-structured inputs, which is highly versatile and can potentially be embedded into various hierarchical Bayesian models beyond those considered in this paper, including point pattern data~\citep{dahl2024modeling} and varying coefficient models. 
Moreover, there is potential for further enhancing the computational efficiency of GS-BART to handle larger-scale problems beyond those considered in this paper, such as using a subsampling-based IIT method~\citep{li2023importance} that constructs the informed proposal only using a subset of training data. 
Column subsampling is another popular strategy used in XGBoost and Random Forest to handle computations with high-dimensional features, where each decision tree is trained only on a random subset of features. It is convenient to implement similar column subsampling strategies in GS-BART. 
For example, we may choose a different subsample of candidate arborescences to construct each $\dbmG_t$ to reduce computation. We provide a preliminary study on the performance of column (i.e., candidate graph) subsampling in Supplementary Section~\ref{app_sec:col-sub}. Finally, on the theoretical side, it will be interesting to provide the Bayesian posterior concentration theories following~\cite{rovckova2020posterior} to justify the superior empirical performance of GS-BART in function estimations and offer insights for further improvement of the methods.   
\vspace{-2mm}
\begin{description}  
\item[Supplementary File:] The supplementary consists of  
(1) Additional details and proofs of the GS-BART algorithms; (2) Detailed data generation and model settings for numerical experiments; (3) Detailed descriptions of synthetic and real datasets; (4) Additional results for U-shape and King County House sales example. (5) Additional results for mixing behavior and column subsampling. 
\end{description}

\appendix
\section*{Supplementary Material}
\setcounter{section}{0}
\renewcommand{\theequation}{S\arabic{equation}}
\renewcommand{\thetable}{S\arabic{table}}
\renewcommand{\thefigure}{S\arabic{figure}}
\renewcommand{\thealgorithm}{S\arabic{algorithm}}
\renewcommand\thesection{S\arabic{section}}
\renewcommand\thesubsection{S\arabic{section}.\arabic{subsection}}
\newcommand{\appropto}{\mathrel{\vcenter{
			\offinterlineskip\halign{\hfil$##$\cr
				\propto\cr\noalign{\kern2pt}\sim\cr\noalign{\kern-2pt}}}}}

\section{GS-BART Algorithm}
\subsection{Validity of the IIT Sampler}\label{app_subsec:GSB_val}

At each backfitting step, our goal is to sample a decision tree $\calT_t$ and its associated leaf parameters $\calM_t$ from the posterior $p(\calT_t, \calM_t \mid \vartheta_{-t})$, where $\vartheta_{-t}$ denotes all other model parameters. 
Note that we omit conditioning on data in the posterior distributions below when there is no risk of confusion.  

\paragraph{Gaussian case.}  
For the Gaussian case, the informed proposal  is proportional to 
\begin{equation*}
	\eta(\calT^{*}_t | \calT_t, \vartheta_{-t})=q(\calT^{*}_t | \calT_t) h\left(\frac{ p(\calT^*_t  | \vartheta_{-t}) q(\calT_t | \calT^{*}_t)}{ p(\calT_t   | \vartheta_{-t})  q(\calT^{*}_t | \calT_t)}\right), 
\end{equation*}
where $q(\cdot)$ is an uninformed proposal kernel (see Section~\ref{app_subsec:GSB_val}) and $h(\cdot)$ is a balancing function satisfying $h(x) = x h(1/x)$.  
It results in a Markov chain with transition matrix  $\bm{P} (\calT_t , \calT^{*}_t)=\eta(\calT^{*}_t | \calT_t, \vartheta_{-t} ) / Z(\calT_t, \vartheta_{-t})$ for each $\calT^{*}_t \in \mathcal{N}(\calT_t)$.
We can easily show that  
\begin{eqnarray*}
	p(\calT_t | \vartheta_{-t})\, \eta(\calT^{*}_t | \calT_t, \vartheta_{-t} ) &=& 
	p(\calT_t | \vartheta_{-t})q(\calT^{*}_t | \calT_t) h\left(\frac{ p(\calT^*_t  | \vartheta_{-t}) q(\calT_t | \calT^{*}_t)}{ p(\calT_t   | \vartheta_{-t})  q(\calT^{*}_t | \calT_t)}\right)\\
	&=& p(\calT_t | \vartheta_{-t})q(\calT^{*}_t | \calT_t) \frac{ p(\calT^*_t  | \vartheta_{-t}) q(\calT_t | \calT^{*}_t)}{ p(\calT_t   | \vartheta_{-t})  q(\calT^{*}_t | \calT_t)} h\left(\frac{ p(\calT_t   | \vartheta_{-t})  q(\calT^{*}_t | \calT_t)}{ p(\calT^*_t  | \vartheta_{-t}) q(\calT_t | \calT^{*}_t)}\right)  \\ 
	&=& p(\calT_t^* | \vartheta_{-t})\, \eta(\calT_t | \calT^{*}_t, \vartheta_{-t} ). 
\end{eqnarray*}
Note that 
$p(\calT_t | \vartheta_{-t})\, \eta(\calT^{*}_t | \calT_t, \vartheta_{-t} )=p(\calT_t   | \vartheta_{-t}) Z(\calT_t, \vartheta_{-t})\bm{P}(\calT, \calT^*)$.
Therefore, $\bm{P}$ is reversible with respect to the distribution  $\tilde{p}(\calT_t   | \vartheta_{-t}) \propto  p(\calT_t   | \vartheta_{-t}) Z(\calT_t, \vartheta_{-t})$, which satisfies the detailed balance condition 
\begin{equation*}
	\begin{aligned}
		\tilde{p}(\calT_t | \vartheta_{-t})\, \bm{P}(\calT, \calT^*)\ = 
		\tilde{p}(\calT_t^* | \vartheta_{-t})\, \bm{P}(\calT^*, \calT). 
	\end{aligned}
\end{equation*}
To correct for the bias caused by the term $Z(\calT_t, \vartheta_{-t})$, after collecting samples from the informed proposal, we draw a sample from $p(\calT_t   | \vartheta_{-t})$ using importance reweighted sampling with importance weight $1/Z(\calT_t, \vartheta_{-t})$. Conditional on a sample of $\calT_t$, we draw $\calM_t$ from its exact conditional normal posterior distribution.   

\paragraph{Non-Gaussian case.} 
We use the second-order approximation for the log likelihood function in the generalized regression problems (see Lemma 1). We note that this approximation is only applied to the informed proposal distribution instead of the target distribution. 
We now show that by accounting for this approximation in the importance weight calculation, we can still ensure that this sampling scheme targets the exact posterior distribution rather than the approximate one.

Let $\hat{p}(\calT_t, \calM_t \mid \vartheta_{-t})=  \hat{p}(\calT_t | \vartheta_{-t})\, \hat{p}(\calM_t | \calT_t, \vartheta_{-t}) \propto \prod_{i \in [n_{\text{train}}]}\hat{L}_\sigma(\phi_i)\,
\pi(\calT_t)\, \pi(\calM_t | \calT_t,\sigma_\mu)$ be the approximate joint posterior distribution of $(\calT_t,\calM_t)$ under the second-order approximation of the log likelihood, where 
$\hat{p}(\calT_t  | \vartheta_{-t})$  and $\hat{p}(\calM_t \mid \calT_t, \vartheta_{-t})$ are the approximated conditional posterior of $\calT_t$ and $\calM_t$ computed following Lemma 1, respectively, and 
$\hat{L}_\sigma(\phi_i)=\exp \Bigl\{\mathcal{L}_{\sigma}(\hat{\phi}_{i}) +   ( \mu_{t,k} - \hat{\phi}_{t,i} ) \dot{\mathcal{L}}_{\sigma}(\hat{\phi}_{i})  +   (\mu_{t,k} - \hat{\phi}_{t,i})^2  \ddot{\mathcal{L}}_{\sigma}(\hat{\phi}_{i}) /2 \Bigl\}$. 
We first draw $\calT^*_t$ from the informed proposal 
\begin{equation}\label{eq:def-weight-nongaussian}
	\eta(\calT^{*}_t | \calT_t, \vartheta_{-t})=q(\calT^{*}_t | \calT_t) h\left(\frac{ \hat{p}(\calT^*_t  | \vartheta_{-t}) q(\calT_t | \calT^{*}_t)}{ \hat{p}(\calT_t   | \vartheta_{-t})  q(\calT^{*}_t | \calT_t)}\right),
\end{equation}
and then draw $\calM^*_t$ from the approximated normal posterior $\hat{p}(\calM_t^* \mid \calT_t^*,\vartheta_{-t})$. 

This informed proposal of $(\calT_t,\calM_t)$ results in a Markov chain with transition matrix $\bm{P}\big((\calT_t,\calM_t ), (\calT^*_t,\calM^*_t)\big)=\bm{P}(\calT_t,\calT^*_t)\hat{p}(\calM^{*}_t|\calT^*_t, \vartheta_{-t})$, which is reversible with respect to the distribution 
\begin{equation}\label{eq:def-tilde-p}
	\tilde{p}(\calT_t,\calM_t| \vartheta_{-t}) \propto  \hat{p}(\calT_t,\calM_t   | \vartheta_{-t}) Z(\calT_t,\vartheta_{-t}).
\end{equation}
The proof of the detailed balance condition is a simple modification of the Gaussian case as follows: 

\begin{equation*}
	\begin{aligned}
		& & \hat{p}(\calT_t | \vartheta_{-t})\, \hat{p}(\calM_t | \calT_t, \vartheta_{-t})Z(\calT_t,\vartheta_{-t})\, \bm{P}(\calT, \calT^*)\, \hat{p}(\calM_t^* | \calT_t^*, \vartheta_{-t})  \\
		& = &
		\hat{p}(\calT_t^* | \vartheta_{-t})\, \hat{p}(\calM_t^* | \calT_t^*, \vartheta_{-t})Z(\calT^*_t,\vartheta_{-t})\, \bm{P}(\calT^*, \calT)\, \hat{p}(\calM_t | \calT_t, \vartheta_{-t}).
	\end{aligned}
\end{equation*}

After collecting samples of $(\calT_t,\calM_t)$ from the informed proposal, we draw a sample from $p(\calT_t, \calM_t   | \vartheta_{-t})$ with the importance weight: 
\begin{equation*}
	w(\calT_t,\calM_t) =  \frac{ p(\calT_t,\calM_t   | \vartheta_{-t})  }{ \hat{p}(\calT_t,\calM_t   | \vartheta_{-t})  Z(\calT_t,\vartheta_{-t})}. 
\end{equation*}
Such a scheme targets the exact posterior $p(\calT_t, \calM_t \mid \vartheta_{-t}, \mathbf{y})$, and its validity is irrelevant to the construction of $\hat{p}(\calT_t | \vartheta_{-t})$ and $\hat{p}(\calM_t | \calT_t, \vartheta_{-t}) $ (i.e., it applies to any other approximation scheme as well). 

Since $ p(\calT_t, \calM_t \mid \vartheta_{-t})   \propto \prod_{i \in [n_{\text{train}}]} L_\sigma(\phi_i)\, \pi(\calT_t)\, \pi(\calM_t | \calT_t,\sigma_\mu)$, for our approximation scheme this weight further simplifies to \begin{equation*}
	w(\calT_t,\calM_t) =  \frac{  1}{  Z(\calT_t,\vartheta_{-t})} \prod_{i \in [n_{\text{train}}]} \frac{ L_\sigma(\phi_i)}{ \hat{L}_\sigma(\phi_i)}.  
\end{equation*}

We remark that we only need to compute  
$\prod_{i \in [n_{\text{train}}]} L_\sigma(\phi_i)$  
for the samples drawn from the informed proposal instead of all possible moves, and thus the associated computational cost is negligible compared to the informed proposal scheme.
Note that $L_\sigma$ also has a closed-form expression. 

We also remark that there is a simple strategy for making use of all samples collected by the Markov chain from the informed proposals. By the law of total expectation, for any function $f$ of interest, we have $$\mathbb{E}[ f(\mathbf{X}) \mid \mathbf{y}] = \mathbb{E}[ \mathbb{E}[ f(\mathbf{X}) \mid  \vartheta_{-t}, \mathbf{y} ] \mid \mathbf{y}], $$
where the inner expectation $\mathbb{E}[ f(\mathbf{X}) \mid  \vartheta_{-t}, \mathbf{y} ]$ can be estimated using all (or part of) samples collected by the Markov chain targeting $\tilde{p}(\calT_t,\calM_t   | \vartheta_{-t})$ in~\eqref{eq:def-tilde-p} with importance correction. 

\subsection{Uninformed Proposal Probabilities and Prior Ratios for the Sampling of Trees}\label{app_subsec:GSB-prior}
Below, we give expressions of the uninformed proposal $q(\cdot)$ and the prior ratio involved in the informed proposal in \eqref{eq:def-weight-nongaussian}. 

\noindent\textbf{Split move probability of the uninformed proposal.} 
Given the current tree $\calT$, an uninformed proposal selects a split or merge move with probabilities $\mathbb{P}(\text { SPLIT })$ and $\mathbb{P}(\text { MERGE })$, respectively. 
In the split move, proposing $\mathcal{T}^*$ from the ``neighborhood" of the original tree $\mathcal{T}$ involves selecting a splittable leaf node, a valid decision arborescence, and a valid decision edge to split a current terminal node into two offspring nodes. The split proposal probability takes the form:  
\begin{equation}\label{app_eq:Split prior ratio}
	\begin{aligned}
		q\left(\mathcal{T}^* | \mathcal{T}\right)= & \mathbb{P}(\text { SPLIT }) \mathbb{P}(\text { selecting } \xi \text { to split }) \times \\
		& \mathbb{P}(\text{ selecting  } \dG \text { to form the split} \mid \xi \text{ is selected}) \times \\
		& \mathbb{P}(\text {selecting} \de \text {to form the split} \mid \xi \text{ and } \dG \text{ are selected}) \\
		= & \mathbb{P}(\text { SPLIT }) \frac{1}{\ell_s} 
		\pi_g\left(\dG | \xi \right) \pi_e\left(\de | \dG, \xi\right).
	\end{aligned}
\end{equation}
where $\ell_s$ denotes the number of splittable leaf nodes. 
We set the proposal probabilities for selecting $\dG$, $\xi$, and $\de$ to be the same as those used in the tree split decision rule prior model described in Section~\ref{subsec:modelformu}. For example, a common choice for $\pi_g\left(\dG | \xi \right)$ is a uniform distribution over the subset of $\dbmG$ whose elements have at least one valid edge with regard to $\xi$, and $\pi_e\left(\de | \dG, \xi\right)$ is also a uniform distribution over the subset of $\dE$ containing valid edges with regard to $\xi$.



\noindent\textbf{Merge move proposal probability of the uninformed proposal.} In the merge move, proposing $\mathcal{T}^*$ from the original tree $\mathcal{T}$ involves selecting an internal node $\eta$ of $\calT$ that has two leaf child nodes (called the second generation internal node) and merging its two offspring nodes, whose proposal probability is
\begin{equation}\label{app_eq:Merge prior ratio}
	q\left(\mathcal{T}^* | \mathcal{T}\right) =\mathbb{P}(\text { MERGE }) \mathbb{P}(\text { selecting } \eta \text { to merge its two children})=\mathbb{P}(\text { MERGE }) \frac{1}{w_2^*}
\end{equation}
where $w_2^*$ denotes the number of second-generation internal nodes in $\calT^*$.  

\noindent\textbf{Decision tree prior ratio.}
In Section \ref{subsec:modelformu}, we describe the generative prior for each decision tree. Based on that, the prior ratio between $\calT^*$ and $\calT$ after a split move has the form:
\begin{equation}\label{eq:structure prior ratio}
	\begin{aligned}
		\frac{\pi \left(\mathcal{T}_*\right)}{\pi(\mathcal{T})} & =\frac{\left(1-p_{\text {split}} \left(\xi^*_L\right)\right)\left(1-p_{\text {split}} \left(\xi^*_R\right)\right) p_{\text {split}} (\xi ) p_{\text {rule}}(\xi)}{\left(1-p_{\text {split}}(\xi)\right)} \\
		& =\frac{\left(1-\frac{\alpha}{\left(1+d_{\xi^*_L}\right)^\beta}\right)\left(1-\frac{\alpha}{\left(1+d_{\xi^*_R}\right)^\beta}\right) \frac{\alpha}{\left(1+d_\xi\right)^\beta} \pi_g\left(\dG | \xi \right) \pi_e\left(\de |\dG, \xi \right)}{1-\frac{\alpha}{\left(1+d_\xi\right)^\beta}} \\
		& =\alpha \frac{\left(1-\frac{\alpha}{\left(2+d_\xi\right)^\beta}\right)^2 \pi_g\left(\dG | \xi \right) \pi_e\left(\de |\dG, \xi\right) }{\left(\left(1+d_\xi\right)^\beta-\alpha\right) } .
	\end{aligned}
\end{equation}
The prior ratio between $\calT^*$ and $\calT$ after a merge move can be derived similarly. 

\subsection{Marginal Likelihood Approximation and Parameter Posterior Distributions}\label{app_subsec:GSB-approxi}

In GS-BART, we apply a second-order Taylor expansion to the marginal likelihood at each leaf node such that the approximated marginal loglikelihood has a closed quadratic form. The derivation is as follows: 

\begin{equation*}
	\begin{aligned}
		m_{\xi_k} &= \int \operatorname{N}\left(\mu_{t, k} | \mu_0, \sigma_\mu^2\right) \exp \Bigl\{ \sum_{i \in \xi_k \cap [n_{\text{train}}] } \mathcal{L}_{\sigma}(\phi_i-\hat{\phi}_i + \hat{\phi}_i)\Bigl\} d \mu_{t, k}\\
		\hat{m}_{\xi_k} & = \int \operatorname{N}\left(\mu_{t, k} | \mu_0, \sigma_\mu^2\right) \exp\Bigl\{ \sum_{i  \in \xi_k \cap [n_{\text{train}}] } \Bigl(\mathcal{L}_{\sigma}(\hat{\phi}_{i}) +   ( \mu_{t, k} - \hat{\phi}_{t,i} ) \dot{\mathcal{L}}_{\sigma}(\hat{\phi}_{i})  +   (\mu_{t, k} - \hat{\phi}_{t,i})^2  \frac{\ddot{\mathcal{L}}_{\sigma}(\hat{\phi}_{i})}{2} \Bigl) \Bigl\} d \mu_{t, k} \\ 
		&= \exp\Bigl\{ \sum_{i \in \xi_k \cap [n_{\text{train}}] } \Bigl( \mathcal{L}_{\sigma}(\hat{\phi}_{i}) - \hat{\phi}_{t,i} \dot{\mathcal{L}}_{\sigma}(\hat{\phi}_{i}) +  \hat{\phi}^2_{t,i} \frac{\ddot{\mathcal{L}}_{\sigma}(\hat{\phi}_{i})}{2} \Bigl) \Bigl\} \\ 
		& \times \int \operatorname{N}\left(\mu_{t, k} | \mu_0, \sigma_\mu^2\right)\exp\Bigl\{  \frac{1}{2} \Bigl(\sum_{i  \in \xi_k \cap [n_{\text{train}}] } \ddot{\mathcal{L}}_{\sigma}(\hat{\phi}_{i}) \Bigl) \mu_{t, k}^2 + \sum_{i \in \xi_k \cap [n_{\text{train}}] }\Bigl(\dot{\mathcal{L}}_{\sigma}(\hat{\phi}_{i}) -  \hat{\phi}_{t,i} \ddot{\mathcal{L}}_{\sigma}(\hat{\phi}_{i}) \Bigl) \mu_{t, k} \Bigl\} d \mu_{t, k} \\
		&=  \frac{1}{\sigma_{\mu} \sqrt{H(\xi) + \frac{1}{\sigma_{\mu}^2}}}
		\exp\Bigl\{ \sum_{i \in \xi_k \cap [n_{\text{train}}] } \Bigl( \mathcal{L}_{\sigma}(\hat{\phi}_{i})  - \hat{\phi}_{t,i}  \dot{\mathcal{L}}_{\sigma}(\hat{\phi}_{i}) +  \hat{\phi}^2_{t,i} \frac{\ddot{\mathcal{L}}_{\sigma}(\hat{\phi}_{i})}{2} \Bigl) - \frac{\mu_0^2}{2 \sigma_{\mu}^2} + \frac{\Bigl(J(\xi) + \frac{\mu_0}{\sigma_{\mu}^2}\Bigl)^2}{2 \left(H(\xi) + \frac{1}{\sigma_{\mu}^2}\right)}\Bigl\} 
	\end{aligned}
\end{equation*}
Similarly, based on this quadratic form, the posterior distribution of $\mu_{t,k}$ is:
\begin{equation*}
	\begin{aligned}
		\mu_{t,k} & | \cT_t, \vartheta_{-t}  \propto  \operatorname{N}\left(\mu_{t,k} | \mu_0, \sigma_\mu^2\right) \exp \Bigl\{ \sum_{i \in \xi_k \cap [n_{\text{train}}] } \mathcal{L}_{\sigma}(\phi_i-\hat{\phi}_i + \hat{\phi}_i)\Bigl\} \\
		& \approx \operatorname{N}\left(\mu_{t, k} | \mu_0, \sigma_\mu^2\right) \exp\Bigl\{ \sum_{i  \in \xi_k \cap [n_{\text{train}}] } \Bigl(\mathcal{L}_{\sigma}(\hat{\phi}_{i}) +   ( \mu_{t,k} - \hat{\phi}_{t,i} ) \dot{\mathcal{L}}_{\sigma}(\hat{\phi}_{i})  +   (\mu_{t,k} - \hat{\phi}_{t,i})^2  \frac{\ddot{\mathcal{L}}_{\sigma}(\hat{\phi}_{i})}{2} \Bigl) \Bigl\} \\
		& \propto \operatorname{N}\left(\mu_{t, k} | \mu_0, \sigma_\mu^2\right)\exp\Bigl\{  \frac{1}{2} \Bigl(\sum_{i  \in \xi_k \cap [n_{\text{train}}] } \ddot{\mathcal{L}}_{\sigma}(\hat{\phi}_{i}) \Bigl) \mu_{t, k}^2 + \sum_{i \in \xi_k \cap [n_{\text{train}}] }\Bigl(\dot{\mathcal{L}}_{\sigma}(\hat{\phi}_{i}) -  \hat{\phi}_{t,i} \ddot{\mathcal{L}}_{\sigma}(\hat{\phi}_{i}) \Bigl) \mu_{t, k}  \Bigl\}  \\ 
		& \propto \exp\Bigl\{  -\frac{1}{2} \Bigl(H(\xi_k)+ \frac{1}{\sigma_{\mu}^2} \Bigl) \mu_{t, k}^2 + \Bigl(J(\xi_k) + \frac{\mu_0}{\sigma_{\mu}^2}\Bigl) \mu_{t, k}  \Bigl\} \\
		\text{Therefore}&, \mu_{t,k} |  \cT_t, \vartheta_{-t} \sim \;  N \left(\frac{J(\xi_k) + \frac{\mu_0}{\sigma_{\mu}^2}}{H(\xi_k)+ \frac{1}{\sigma_{\mu}^2}},\; \frac{1}{H(\xi_k)+ \frac{1}{\sigma_{\mu}^2}} \right), \; \text{for }  k  = 1, \dots, \ell_t
	\end{aligned}
\end{equation*}
Meanwhile, the posterior distribution of $\sigma^2_{\mu}$ can be easily derived using conjugacy:
\begin{equation*}
	\begin{aligned}
		\sigma_{\mu}^2 &| \{\cT_{t}, \cM_{t}\}_{t=1}^T, \sigma  \propto \prod_{t=1}^{T} \prod_{k=1}^{\ell_t } \operatorname{N}\left(\mu_{t, k} | \mu_0, \sigma_\mu^2\right) \text{inv-Gamma} \left( \frac{a}{2}, \; \frac{b}{2} \right)  \\
		& \propto \left(\sigma_{\mu}^2\right)^{-\left( \frac{\sum_{t = 1}^{T}\ell_{t} + a}{2} +1\right)} \exp \Bigl\{ -\frac{\sum_{t = 1}^{T}\sum_{k = 1}^{\ell_t} \left( \mu_{t, k} - \mu_0\right)^2 + b}{2}\Bigl\} \\
		\text{Therefore, }
		\sigma_{\mu}^2 & | \{\cT_{t}, \cM_{t}\}_{t=1}^T, \sigma \sim\;  \text{inv-Gamma} \left( \frac{ \sum_{t = 1 }^T \ell_t + a}{2}, \; \frac{ \sum_{t = 1 }^T \sum_{k = 1}^{\ell_t}\left(\mu_{t,k} - \mu_0 \right)^2 + b}{2} \right)
	\end{aligned}
\end{equation*}

\subsection{Split and Merge Move Marginal Likelihood Ratios}\label{app_subsec:GSB-ratio}

In a split move, suppose a node $\xi$ is split into two offspring nodes $\left(\xi^*_{L}, \xi^*_{R} \right)$, then the log marginal likelihood ratio between $\calT^*$ and $\calT$ only depends on 
$\log m_{\xi^*_{L}} +\log m_{\xi^*_{R}}-\log m_{\xi}$ and can be expressed by
\begin{equation}\label{app_eq:split likelihood ratio}
	\begin{aligned}
		&\frac{1}{2} \left(\frac{\Bigl(J(\xi^*_{L}) + \frac{\mu_0}{\sigma_{\mu}^2}\Bigl)^2}{ H(\xi^*_{L}) + \frac{1}{\sigma_{\mu}^2}} + \frac{\Bigl(J(\xi^*_{R}) + \frac{\mu_0}{\sigma_{\mu}^2}\Bigl)^2}{ H(\xi^*_{R}) + \frac{1}{\sigma_{\mu}^2}}   -  \frac{\Bigl(J(\xi) + \frac{\mu_0}{\sigma_{\mu}^2}\Bigl)^2}{ H(\xi) + \frac{1}{\sigma_{\mu}^2}}   - \frac{\mu_0^2}{\sigma_{\mu}^2} \right) \\
		&- \frac{1}{2} \left( \log \Bigl(H(\xi^*_{L}) + \frac{1}{\sigma_{\mu}^2}\Bigl) 
		+  \log \Bigl(H(\xi^*_{R}) + \frac{1}{\sigma_{\mu}^2}\Bigl)-\log \Bigl(H(\xi) + \frac{1}{\sigma_{\mu}^2}\Bigl)\right) - \log \sigma_{\mu}. 	\end{aligned}
\end{equation}
Similarly, in a merge move, the log marginal likelihood ratio can be derived by reversing the split move. 
In a merge move, let $(\xi_{L}, \xi_{R})$  denote a pair of left and right leaf node sharing the same second-generation internal node in the current decision tree $\calT$ and $\xi^*$  denote the new leaf node 
by merging $(\xi_{L}, \xi_{R})$ in the proposed new tree $\calT^*$. The log marginal likelihood ratio of $\textbf{y}$ between $\calT^*$ and $\calT$ is
$\log m_{\xi^*}  - \log m_{\xi_{L}} - \log m_{\xi_{R}}$. Following the same derivation, this log marginal likelihood ratio has the form
\begin{equation}\label{app_eq:merge likelihood ratio}
	\begin{aligned}
		& \frac{1}{2} \left(\frac{\Bigl(J\left(\xi^{*}\right) + \frac{\mu_0}{\sigma_{\mu}^2}\Bigl)^2}{ H\left(\xi^{*}\right) + \frac{1}{\sigma_{\mu}^2}}  - \frac{\Bigl(J\left(\xi_{L}\right) + \frac{\mu_0}{\sigma_{\mu}^2}\Bigl)^2}{ H\left(\xi_{L}\right) + \frac{1}{\sigma_{\mu}^2}} - \frac{\Bigl(J\left(\xi_{R}\right) + \frac{\mu_0}{\sigma_{\mu}^2}\Bigl)^2}{ H\left(\xi_{R}\right) + \frac{1}{\sigma_{\mu}^2}} 
		+ \frac{\mu_0^2}{\sigma_{\mu}^2} \right) \\
		& - \frac{1}{2} \left( \log \Bigl(H\left(\xi^{*}\right) + \frac{1}{\sigma_{\mu}^2}\Bigl) - \log \Bigl(H\left(\xi_{L}\right) + \frac{1}{\sigma_{\mu}^2}\Bigl) 
		-  \log \Bigl(H\left(\xi_{R}\right) + \frac{1}{\sigma_{\mu}^2}\Bigl)\right) + \log \sigma_{\mu},
	\end{aligned}
\end{equation}

\subsection{Recursive Computation of Split Likelihood Ratios}\label{app_subsec:GSB-recur}

Alg.~\ref{app_alg:recur split} provides the detailed recursive procedure used to compute the log marginal likelihood ratios of all valid split moves on a single arborescence. This computation is required for each candidate arborescence in the candidate graph set $\dbmG$, and can be efficiently parallelized across arborescences to reduce overall cost. The algorithm returns both the likelihood ratio associated with each potential split edge and an indicator of its edge type (valid, invalid, or redundant). After obtaining the likelihood ratios for all valid splits, these values are combined with the corresponding merge ratios, prior ratios  to form the proposal probabilities and the weight function $\eta$ in the informed importance tempering (IIT) sampler.

\begin{algorithm}[!h]\setstretch{0.4}
	\caption{
		Recursive computations of the likelihood ratio of each possible split (SplitLikRatio) and determine edge type (ET) on a single arborescence for a given weak learner. }\label{app_alg:recur split}
	\begin{algorithmic} 
		{\small 
			\Require Arborescence $\dG=\{V,\dE\}$, weak learner $\calT$ with leaf node set $\underline{\xi} = \left(\xi_1, \ldots, \xi_{\ell} \right)$.  
			\State Set \Call{SplitLikRatio}{$\protect \de_v$} $\leftarrow \mathbf{-\infty}_{\ell}$, \Call{ET}{$\protect \de_v$} $\leftarrow \mathbf{1}_{\ell}$  $\quad \forall \de_v \in \dE$.
			\Procedure{ComputeSplitLikRatio}{$v$}
			\State $\underline{J}\left(\underline{\xi}_{R, v}^{*}\right) \leftarrow  \underline{J}(v) $,  $\underline{H}\left(\underline{\xi}_{R, v}^{*}\right) \leftarrow  \underline{H}\left(v\right) $, $ \underline{\xi}_{R, v}^{*} \leftarrow  v \cap \underline{\xi}$, $\operatorname{cnt} \leftarrow \mathbf{0}_{\ell}$ 
			\If { $v \notin  B(\dG)$} 
			\For {$v_c \in \operatorname{ch}(v)$}
			\State $\underline{J}\left(\underline{\xi}^*_{R, v_c}\right), \underline{H}\left(\underline{\xi}^*_{R, v_c}\right), 
			\underline{\xi}^*_{R, v_c} =$ \Call{ComputeSplitLikRatio}{$v_c$}
			\State $\underline{J}\left(\underline{\xi}_{R, v}^{*}\right) \leftarrow \underline{J}\left(\underline{\xi}_{R, v}^{*}\right) +  \underline{J}\left(\underline{\xi}_{R, v_c}^{*}\right) $, $\underline{H}\left(\underline{\xi}_{R, v}^{*}\right) \leftarrow \underline{H}\left(\underline{\xi}_{R, v}^{*}\right) +  \underline{H}\left(\underline{\xi}_{R, v_c}^{*}\right) $, 
			\State $\underline{\xi}_{R, v}^{*}  \leftarrow \underline{\xi}_{R, v}^{*} \cup \underline{\xi}_{R, v_c}^{*}$,  $\operatorname{cnt}\left[ | \underline{\xi}_{R, v_c}^{*} \cap  [n_{\text{train}}] |  > 0\right] += 1$ 
			\EndFor
			\EndIf
			\State $\underline{J}\left(\underline{\xi}_{L, v}^{*}\right) \leftarrow \underline{J}\left(\underline{\xi}\right) -  \underline{J}\left(\underline{\xi}_{R, v}^{*}\right) $, $\underline{H}\left(\underline{\xi}_{L, v}^{*}\right) \leftarrow \underline{H}\left(\underline{\xi}\right) -  \underline{H}\left(\underline{\xi}_{R, v}^{*}\right)$. 
			
			\For {$k \in 1:\ell$}
			\If {  $\left(| \xi_{k, R, v}^{*} \cap [n_{\text{train}}] |  = 0 \right) \text{\textbf{or}} \left( |\xi_k \cap [n_{\text{train}}]|- |\xi_{k, R, v}^{*}\cap [n_{\text{train}}] |  = 0 \right)$}
			\State 
			Set k$^\text{th}$ element of \Call{ET}{$\protect \de_v$} as $- 1$
			\ElsIf{$\left(\operatorname{cnt}\left[ k \right]=1\right)$ \text{\textbf{and}} $\left(| v \cap \xi_k \cap [n_{\text{train}}] | = 0\right)$ }
			\State  Set k$^\text{th}$ element of \Call{ET}{$\protect \de_v$} as $0$
			\EndIf
			\EndFor
			\State Calculate \Call{SplitLikRatio}{$\protect \de_v$} $\left[\right.$  \Call{ET}{$\protect \de_v$} $ = 1 \left.\right]$ using equation~\eqref{app_eq:split likelihood ratio} in Supplementary
			\EndProcedure
		}
		\State \Call{ComputeSplitLikRatio}{$v_0$}. \\
		\Return \Call{SplitLikRatio}{$\protect \de_v$}, \Call{ET}{$\protect \de_v$} $\quad \forall \de_v \in E$
	\end{algorithmic}
\end{algorithm}

\subsection{Maintaining Graph Connectivity}\label{app_subsec:graph-int}

Figure \ref{app_fig:graph integrity} demonstrates why we do not remove edges of arborescences in split moves. Mainly, edge removal breaks the graph connectivity which may lead to incorrect marginal likelihood calculation and label updating. To be specific,  the first column of the blue bracket shows the cluster memberships after splitting root node by $\de_3$ of $\dG_1$ with or without removing this edge. We further split $\xi_2$ by $\de_8$ of $\dG_1$ (shown in the second column of blue bracket). Removing this edge will result in $\dG_1$ becoming two sub-graphs denoted as $\dG_{1,1}$ and $\dG_{1,2}$ with root nodes being that of the original graph $v_4$ and node $v_8$, respectively. Suppose we split another leaf node $\xi_1$ with respect to $\dG_1$. Calculating the split likelihood on the entire $\dG_1$ differs from the result of computing it separately on two sub-graphs; $\operatorname{decs}(v_{12})$ are different on $\dG_1$ and $\dG_{1,1}$ which lead to different $\xi^{*}_{R,1} = \xi_1 \cap [n_{\text{train}}] \cap \left( v_{12} \cup \operatorname{decs}(v_{12}) \right)$, split likelihood ratio split by $v_{12}$ and label updating (shown in the red bracket). 
\begin{center}
	\begin{figure}[h]
		\centering
		\includegraphics[width=6in,height=4in]{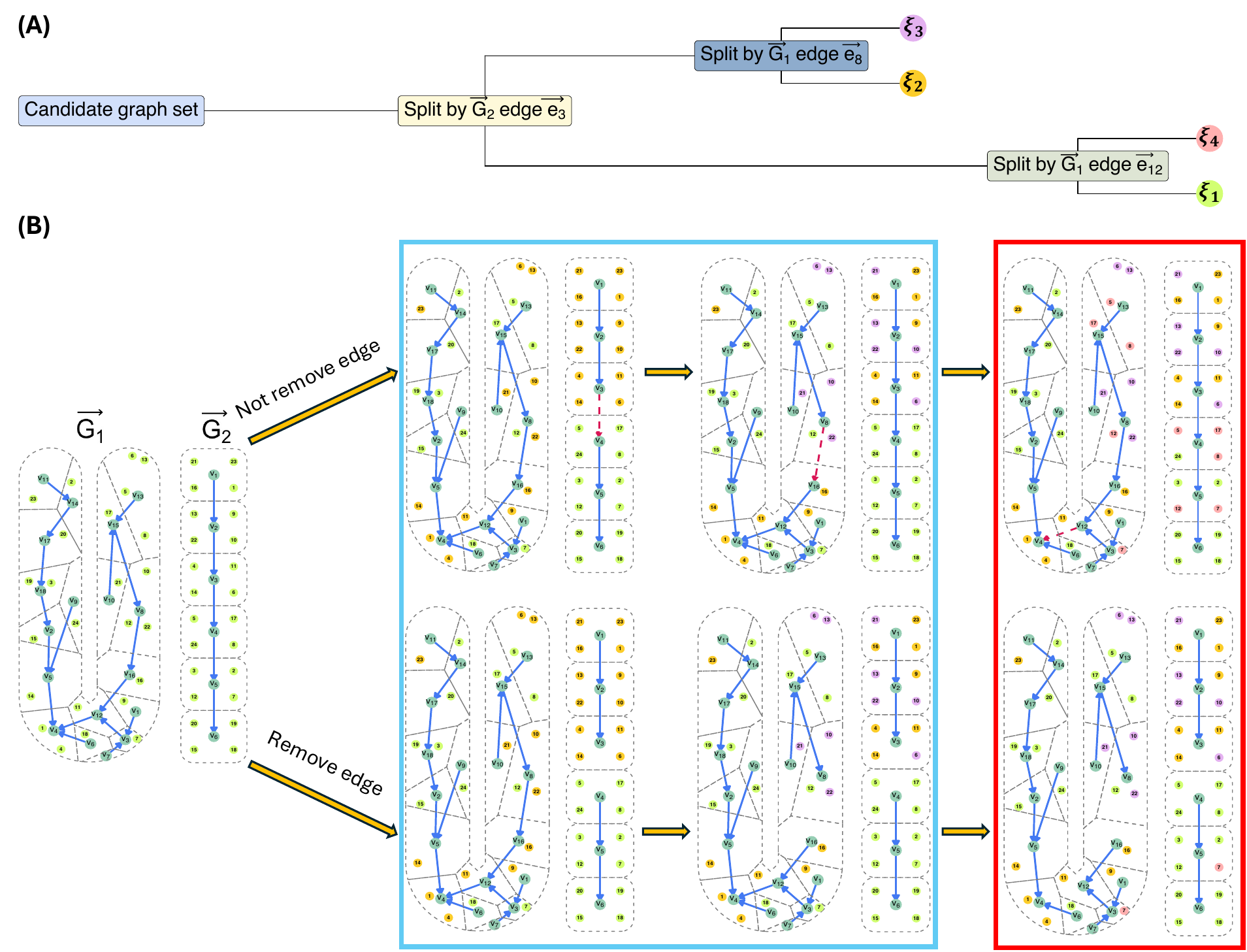}
		\setlength{\belowcaptionskip}{-20pt}
		\caption{
			(A): A graph split decision tree with three internal nodes and four leaf nodes; \\
			(B): The first and second rows denote the label updating procedure without and with edge removal under the same tree growth decision rule. The color of each observation represents the cluster it belongs to.} 
	\label{app_fig:graph integrity}
\end{figure}
\end{center}

\subsection{Further Discussion of Equivalent Edge Set}\label{app_subsec:redund-egde}

\noindent\textbf{Redundant edges determination.}
To determine whether an edge $\de_v$ is redundant with respect to a leaf node $\xi$, we evaluate the following conditions: (1) $| v \cap \xi \cap [n_{\text{train}}] | = 0$; (2) Among the child vertices of $v$, there exists only one vertex, $v_c$, such that $\de_{v_c}$ is a valid or redundant split with respect to $\xi$. In other words, splitting by $\de_{v_c}$ results in a bi-partition of $\xi$, regardless of whether this bi-partition is unique. We use the vector `$\operatorname{cnt}$' to track the number of such child vertices for $v$ with respect to each leaf node in Algorithm~\ref{app_alg:recur split}.

\medskip
\noindent\textbf{Equivalent edge set random sampling.}
As shown in Figure \ref{app_fig:alg demo}, after applying Alg.~\ref{app_alg:recur split} to an aborescences, we obtain a SplitLikRatio matrix and an Edge type matrix for each candidate graph, both with dimensions $|V| \times \ell$. 
Suppose $\de_v$ is selected by the IIT algorithm to split the leaf node $\xi$, then we check the edge type of $\de_{\operatorname{pa}(v)}$ with respect to $\xi$. If the edge type is `1' or `-1', then we conduct this split. Otherwise, if the edge type is `0', which means $\de_v$ is the bottom edge of an equivalent edge set, we go through the edge type of $\de_{\operatorname{ances}(v)}$ in bottom-up order until we meet `-1' or `1'. All the edges being gone through except the last one together with $\de_v$ consist of an equivalent edge set of $\xi$. Once we identify all the edges belonging to this equivalent edge set, we can easily sample one to form the split rule. We demonstrate a concrete example in Figure \ref{app_fig:alg demo}. Suppose $\de_{11}$ of $\dG_{1}$ is selected to split leaf node $\xi_2$ by. We check the edge type of $\de_{14}$ with regard to $\xi_2$ which is `0'. Following the method mentioned above, we get $\{\de_{11}, \de_{14}, \de_{17},\de_{18},\de_{2} \}$ consisting of an equivalent edge set for $\xi_2$. We can randomly sample one edge, say $\de_{18}$, from this set to make the split. 
We then check the edge type of $\de_{2}$ with regard to $\xi_1$, which is `1', leading to a split of $\xi_1$ by $\de_{18}$. 

\begin{figure}[ht]
\centering
\includegraphics[width=5.4in,height=2.7in]{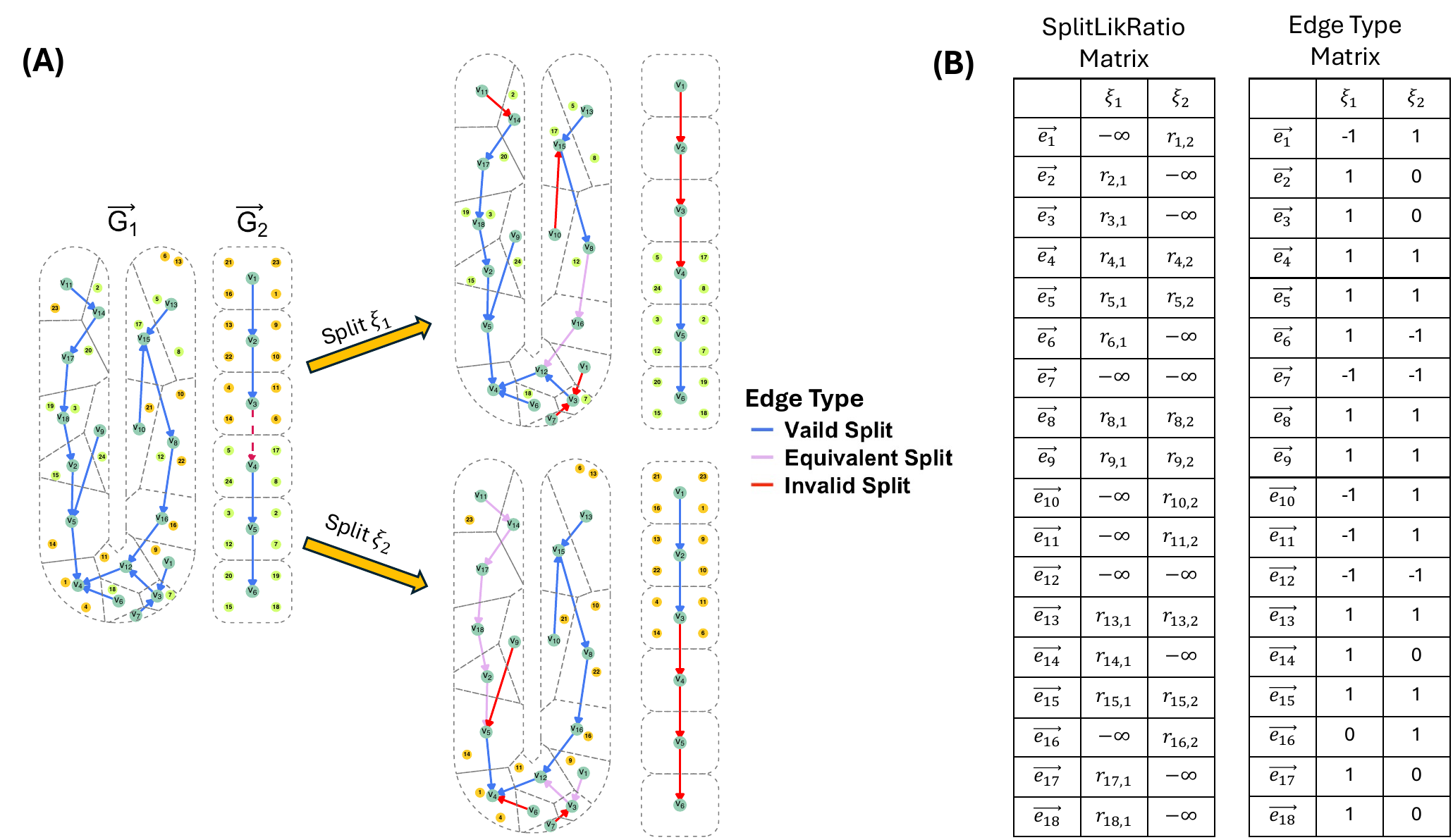}
\caption{(A): The decision tree after splitting the root node into two leaf nodes $\xi_1$ (lime dots) and $\xi_2$ (orange dots) by the edge in $\protect \dG_2$ marked by the red dashed line. The edge types when $\xi_1$ or $\xi_2$ is selected for splitting are marked by different edge colors; \\
	(B): The SplitLikRatio and Edge type matrices for $\protect \dG_1$ after splitting the root node by an edge in $\protect \dG_2$ following Alg.~\ref{app_alg:recur split}. In the SplitLikRatio matrix, $r_{i,j}$ represents splitting leaf node $\xi_{j}$ by $\protect \de_i$ of $\protect \dG_1$. In the Edge Type Matrix, we use 1', -1', 0' to represent valid, invalid, and redundant edges, respectively. Notice that $v_{12}$ is the root vertex of $\protect \dG_1$, so $\protect \de_{12}$ does not exist and is marked as -1' in the Edge type matrix. }
\label{app_fig:alg demo}
\end{figure}

\section{Simulation Data Generation Settings}\label{app_sec:exp-setting}
For different response data, the likelihood function, the priors of model parameters, and hyper-parameters settings are different. The detailed settings for each type of response data are as follows: \\
\noindent\underline{Continuous data:} We use the Gaussian likelihood for continuous response data. The log marginal likelihood function has an explicit quadratic form and the corresponding $\dot{\mathcal{L}}_{\sigma}(\phi)$ and $\ddot{\mathcal{L}}_{\sigma}(\phi)$ are shown in Section~\ref{subsec:quadratic}. 
For the residual variance, $\sigma^2$, we follow standard BART and assume
$\begin{aligned}
\sigma^2  \sim \text{inv-Gamma} \left( \frac{\nu}{2}, \frac{\nu \lambda}{2}\right)
\end{aligned}$, 
where the value of $\nu$ is 3 and we pick the value of $\lambda$ so that the 0.9 quantile of the prior on $\sigma$ is located at the sample standard deviation $\hat{\sigma}$ of the response data, that is, $P(\sigma<\hat{\sigma})=0.9$. Under this setting, the posterior of $\sigma^2$ follows $\text{inv-Gamma} \left( \frac{\nu + \sum_{i= 1}^{n}(y_i - \hat{y_i})^2}{2}, \frac{\nu \lambda + n}{2}\right)$. We also follow BART to rescale and recenter the response data such that it ranges from -0.5 to 0.5 before fitting GS-BART model. We set the initial value of $\sigma_\mu$ at $0.5/2 \sqrt{T}$.
\\
\underline{Count data:} We employ the variance model for count response data. Specifically, we assume $y_i  \sim \text{Normal}\left\{ \phi_i , \phi_i\right\}$. 
In this case, $\dot{\mathcal{L}}_{\sigma}(\phi_i)$ and $\ddot{\mathcal{L}}_{\sigma}(\phi_i)$ have the form:
$$
\begin{aligned}
\dot{\mathcal{L}}_{\sigma}(\phi_i) &= -0.5 + \frac{\left(y_i - \exp \left( \phi_i \right)\right)^2}{2 \exp \left( \phi_i \right)} + y_i - \exp \left( \phi_i \right), \quad 
\ddot{\mathcal{L}}_{\sigma}(\phi_i) &= -y_i - \frac{\left(y_i - \exp \left( \phi_i \right)\right)^2}{2 \exp \left( \phi_i \right)}
\end{aligned}
$$ 
The initial value of $\sigma_\mu$ is $\sigma_\mu = 6/k \sqrt{T}$ with proper $k$ with the rationale that $-6$ to $6$ covers the typical ranges of log response values in our examples.
\\
\underline{Classification data:} Suppose our response data has $C$ unordered categories. Let $y_{i,j}$ denote a binary variable indicating whether the $i$-th observation belongs to the $j$-th category. We fit $C$ binary logistic GS-BART regression models where the probability that a sample belongs to a particular category $j$ versus all other categories is modeled by $1/\left(1+\exp(-\phi_{i,j})\right)$, for $j=1,\ldots, C$. 
For each binary logistic regression, $\dot{\mathcal{L}}_{\sigma}(\phi_{i,j})$ and $\ddot{\mathcal{L}}_{\sigma}(\phi_{i,j})$ have the form:
$$
\begin{aligned}
\dot{\mathcal{L}}_{\sigma}(\phi_{i,j}) &= y_{i,j}  - \frac{\exp(\phi_{i,j})}{1+\exp(\phi_{i,j})}, \quad 
\ddot{\mathcal{L}}_{\sigma}(\phi_{i,j}) &= - \frac{\exp(\phi_{i,j})}{(1+\exp(\phi_{i,j}))^2}
\end{aligned}
$$
The initial value of $\sigma_\mu$ is adopted as $\sigma_\mu = 3/k \sqrt{T}$ since the interval $\left[-3, 3\right]$ for $\operatorname{logit}(p_j)$ covers the majority of possible values for $p_j$.
The Bayesian estimates, $\hat{\phi}_{i, j}$, are then used to  predict the probability of $\mathbb{P}(y_{i}=j)$ using  $\hat{p}_{i,j} = \frac{\exp\{ \hat{\phi}_{i, j} \}}{\sum_{j = 1}^{C}\exp\{ \hat{\phi}_{i, j} \}}$.

\section{Synthetic and Real Data Sets Description}\label{app_sec:data-intro}
We provide a detailed description of synthetic and real datasets and the sources of 
real datasets in this section. \\
\noindent\underline{U-shape example:} We generate synthetic data at $n_{\text{train}} = 800$ training locations and $n_{\text{test}}=200$ testing locations sampled from the two-dimensional U-shaped domain $\calS$ divided into three true subregions as shown in Figure \ref{fig:Ushape results}(A). We generate 5 unstructured features: $(x_1, \cdots, x_5)$. The true function $\phi$ depends on $(\bfs, x_1)$ and their interaction and is piecewise smooth within each subregion but has sharp changes across boundaries. For continuous response data, we generate from a normal error model centered at the true function and with $\sigma = 0.15$. For count response data, we generate data from Poisson model 
with the rate function $r_0(\phi) = \lfloor\exp(\frac{\phi + 7}{3})\rceil$. 
For classification data, we generate the data by discretizing $\phi$ into five categories using 0.2, 0.3, 0.65, and 0.8 quantiles as the thresholds. 

\noindent\underline{Torus example:} We randomly sample $n_{\text{train}} = 500$ training locations and $n_{\text{test}} = 200$ testing location on $\calS$ which is a two-dimensional bitten torus manifold embedded in $\mathbb{R}^3$ \citep{niu2019intrinsic}. We generate two unstructured features together with three coordinates of locations $\bfs$ to construct chain graphs. The true mean function only depends on $(\bfs, x_1)$ and has two discontinuities. 
Similar to the U-shape example, we generate the continuous response data by adding a normal error to the true function with $\sigma = 0.1$. For count response data, we generate data from Poisson with 
the rate function: $r_0(\phi) = \lfloor \exp(\frac{\phi + 17}{8}) )\rceil$. We generate the data for classification tasks by discretizing $\phi$ into five categories using the same threshold as the U-shape example. 

\noindent\underline{Friedman example:} We include Friedman function \citep{friedman1991multivariate} as an example involving only unstructured features. We randomly generate $n_{\text{train}} = 400$ training data and $n_{\text{test}} = 100$ testing data with $5$ unstructured covariates: $(x_1, \cdots, x_5)$. The true function is $10 \sin \left(\pi x_1 x_2\right)+20\left(x_3-0.5\right)^2+10 x_4$. We generate data from the continuous response regression model with $\sigma = 1$.

\noindent\underline{King House:} We analyze the home sales price data of $n = 20149$ homes in King County, Seattle, Washington, which is obtained from https://geodacenter.github.io/data-and-lab/\\KingCounty-HouseSales2015/. We took the logarithm of the house sales price ( in U.S. dollars) as the response and fit the model using the house's latitude and longitude as the spatial structured features together with the other $p = 15$ unstructured features.

\noindent\underline{NYC Education:} 
The datasets we analyze are New York City income \& education (n = 1690, p = 6). 
The response variable is the income per capita of New York City residents. Location is included as the spatial structured feature, while educational attainment level and the other four variables are treated as unstructured features in the regression model. Data source: https://geodacenter.github.io/data-and-lab//NYC-Census-2000. 

\noindent\underline{US Election:} regression using social demographic data to explain the (logit-transformed) proportion of Republican votes in the 2016 presidential elections at the county level. We limit our analysis to counties within the 48 contiguous United States with a spatial adjacency graph with $3071$ nodes, $8669$ edges, and $p=55$ regular features. The dataset is obtained from:https://github.com/tonmcg/US\_County\_Level\_Election\_Results\_08-20 

\noindent\underline{US Cancer:} We analyze the cancer counts of  $n = 3103$ counties in Unite States. We take the population of each county as our offset variable in the Poisson regression, and we use some social demographic variables together with the latitude and longitude data (p = 29) to generate chain arborescences to fit our model. \\
Data source: https://ghdx.healthdata.org/record/ihme-data/united-states-cancer-mortality-rates-county-1980-2014

\noindent\underline{Pollution:} We analyze the county-level AQS data in 2023 from: https://www.epa.gov/air-trends/air-quality-cities-and-counties. We consider a binary indicator of whether polluted days were observed in each county or not. Only 893 counties have measurements. 
We use $p = 29$ features to construct chain graphs, including both latitude and longitude data of the centroid location of each county together with social economical variables. In order to maintain the connectivity of the structure graphs, we use the adjacency graph of all counties of the contiguous United States to generate tree arborescences. 

\noindent\underline{Flood:} The response data is a binary variable used to represent whether a flood occurs in the county or not. We have observations at $n = 3037$ counties. We use the same structured graph and unstructured features as the Pollution data example. 

\noindent\underline{Cora Network:} The citation network data has $n =  2780$ vertices, each representing a scientific publication,  5429 edges representing the citation relationships between papers, and $p = 1433$ binary features indicating the presence or absence of specific words from a dictionary. The response variable is categorical, specifying one of seven research topics to which each publication belongs. We use all binary features together with 5 tree arborescences generated from this citation network to construct the candidate graph sets to train each weak learner of GS-BART. Data source: https://fit.cvut.cz/cs/veda-a-vyzkum/cemu-se-venujeme/projekty/relational.

Table \ref{Tab:experiment setting} provides the number of arborescences and their vertices used for each spatial or network-structured and unstructured feature for each data example.

\begin{table}[h]
\small
\begin{center}
	\begin{threeparttable}
		\begin{tabular}{lcc}
			\toprule[1pt]
			DATASETS & $\mathbf{|\dbmG_\text{structured}|}/\mathbf{|\dbmG_\text{unstructured}|}$ & $\mathbf{|V_\text{structured}|}/\mathbf{|V_\text{unstructured}|}$ \\
			\hline  
			U-SHAPE & 5/7 & 100/100  \\
			TORUS & 5/8 & 100/100    \\ 
			FRIEDMAN  & 0/5  & -/100  \\
			\hline 
			NYC Education & 5/7 & 100/100   \\
			King County House  & 5/17 & 245/100   \\ 
			US Election  & 5/53  & 100/100  \\
			US Cancer & 5/31  & 100/100 \\
			Air Pollution & 5/29  & 100/100   \\ 
			Flood & 5/29 & 100/100   \\ 
			Cora Network & 5/1433 & 100/2 \\
			\bottomrule[1pt]
		\end{tabular}
		\setlength{\belowcaptionskip}{-10pt} 
		\caption{The numbers of arborescences  and vertices used for the spatial or network-structured  
			and unstructured features for each experiment.}\label{Tab:experiment setting}
	\end{threeparttable}
\end{center}
\end{table}



\section{Additional Results for Numerical Experiments}\label{app_sec:add-res}

\subsection{Mixing Performance}\label{app_subsec:GSB-mixing}  

We compare the mixing effectiveness of the IIT sampler in GS-BART with the standard split–merge sampler in BART using both the U-shape and Friedman normal regression examples. 
Figure~\ref{app_fig:ushapetrace} displays the trace plots of the mean squared prediction error after 15 sweeps for GS-BART and after 5000 sweeps for BART, as well as scatter plots comparing the prediction performance from GS-BART and BART for one U-shaped simulation. GS-BART exhibits improved mixing performance with faster convergence and a higher effective sample size (ESS) rate, despite using substantially fewer sweeps. Its predictive accuracy also improves considerably, with MSPE reduced to only 34\% of that of BART. Additionally, we evaluate the 95\% credible interval coverage at each location. GS-BART achieves a coverage rate of 92\%, whereas BART severely underestimates uncertainty, with a coverage rate of only 64\%. The superior performance of GS-BART over BART is potentially attributed to both its ability to efficiently handle graph-structured features and the efficiency of IIT over the standard split-merge sampler. 

To examine the sole contribution of the IIT algorithm for mixing, we consider the classic Friedman example without giving GS-BART the advantage of considering spatial and network features. All methods use the same covariates and the same number of trees, and GS-BART adopts the same candidate cut points as BART to construct chain graphs.  The MSPE trace plot in Figure~\ref{app_fig:friedmantrace} shows that GS-BART converges very quickly and mixes well after only 15 burn-in sweeps. Consistent with the U-shape example, GS-BART also achieves higher ESS and improved coverage rates. Moreover, GS-BART provides more accurate predictions on held-out responses than BART. This improvement is statistically significant, as confirmed by the prediction comparisons with repeated experiments summarized in Table 1. These findings suggest that the IIT algorithm achieves improved mixing efficiency compared to standard split-merge approaches, which may in turn contribute to its superior predictive performance. 


We also compare the mixing performance of GS-BART and the generalized BART algorithm in~\cite{linero2024generalized} on the U.S. cancer dataset, a count regression problem. As shown in Figure~\ref{app_fig:CancerTrace}, GS-BART demonstrates markedly improved mixing behavior over generalized BART for count regression, with more stationary trace plots, substantially larger effective sample sizes, faster coverage, and higher predictive accuracy.

\begin{figure}[h]
\begin{center}
	\includegraphics[scale=.25]{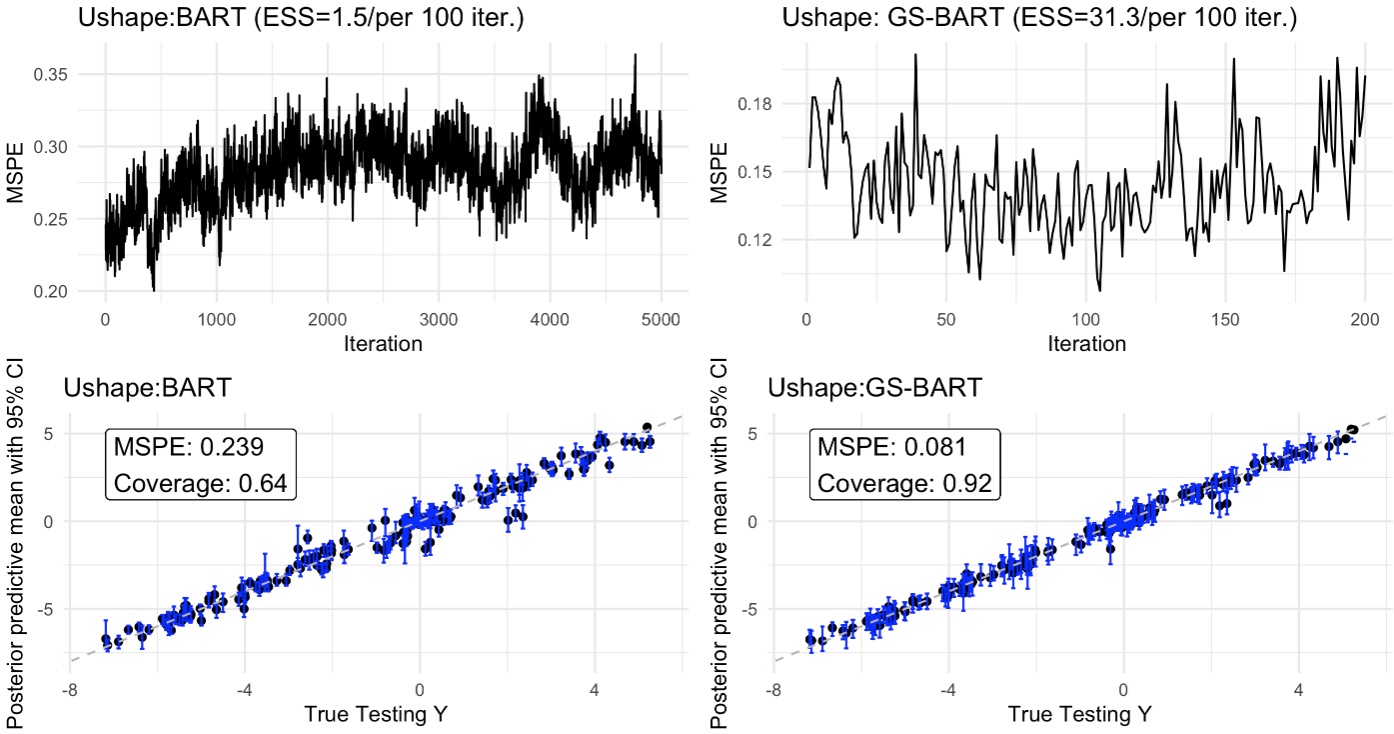}
	\setlength{\belowcaptionskip}{-15pt}
	\caption{U-shape example: Trace plots of MSPE for predictive posterior samples and scatter plots of predictions versus true test data. Titles report the ESS rate; boxes display the MSPE of the posterior mean and the coverage rate.}
	\label{app_fig:ushapetrace}  
\end{center}	
\end{figure}

\begin{figure}[h]
\begin{center}
	\includegraphics[scale=.25]{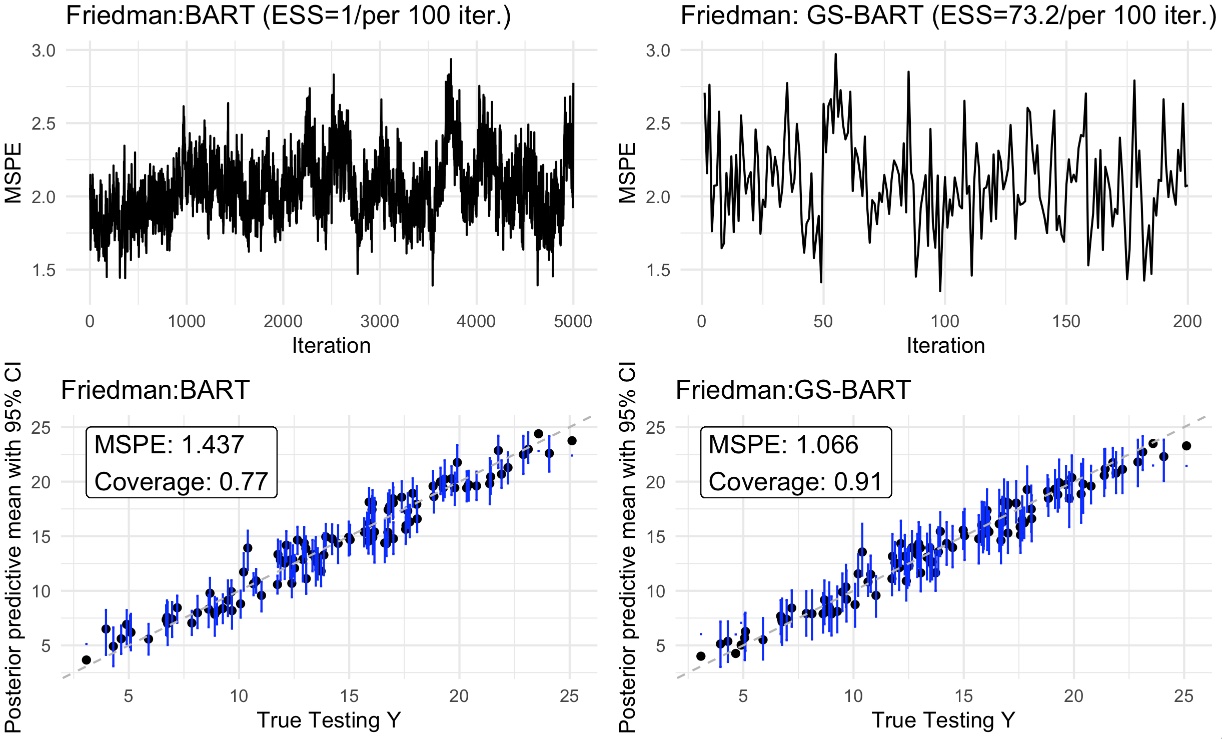}
	\setlength{\belowcaptionskip}{-15pt}
	\caption{Friedman example: Trace plots of MSPE for predictive posterior samples and scatter plots of predictions versus true test data. Titles report the ESS rate; boxes display the MSPE of the posterior mean and the coverage rate.}
	\label{app_fig:friedmantrace}   
\end{center}	
\end{figure}

\begin{figure}[h]
\begin{center}
	\includegraphics[scale=.45]{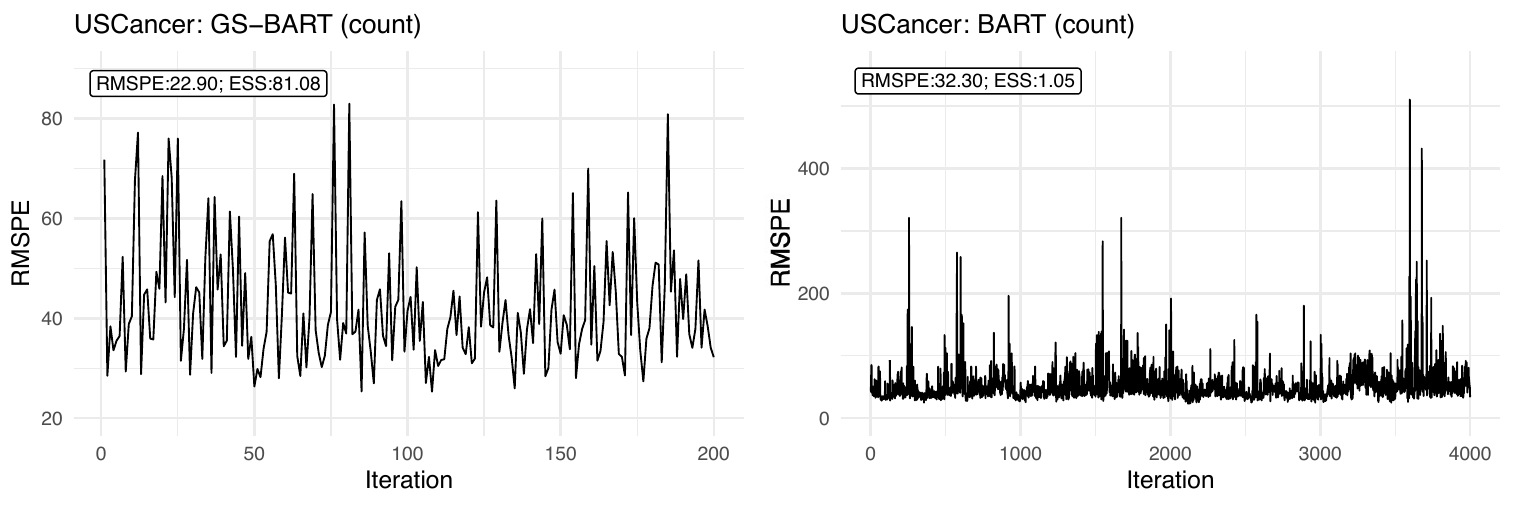}
	\caption{Trace plots of RMSPE of each posterior sample, RMSPE of the posterior mean, and ESS per 100 iterations for the US cancer count data using GS-BART and generalized BART for count regression.}\label{app_fig:CancerTrace}
\end{center}	
\end{figure}

\subsection{Additional Results of the U-shape Example}\label{app_subsec:Ushape-add}

\noindent\textbf{Sensitivity analysis.}
GS-BART involves the choice of model tuning parameters that can influence its performance.  We mainly focus on studying the effects of those parameters specific to GS-BART or highly relevant to the computational complexity, omitting the sensitivity analysis for most parameters shared with BART since we adopt the same settings with BART. Specifically, we focus on three parameters: the number of weak learners $T$, the number of arborescences sampled from each structural graph, $M$, and the number of Voronoi centroids used to construct the structural candidate graph,  $|V|$. 

The sensitivity analysis is performed on the U-shape normal regression example with $n_{train} = 800$ and $n_{test} = 200$. Since the arborescences and Voronoi centroids are randomly picked, we repeat the experiment 50 times and show the averaged results. 

We first study the effects of $T$ and $M$ while fixing $|V| = 100$. The MSPEs using different combinations of $M$ and $T$ are reported in Table \ref{app_Tab: Ushape Sensitivity analysis 1}. 
We notice that increasing $T$ and $M$ generally improves the out-of-sample performance for a fixed $|V|$. However, beyond a certain threshold, there appear to be diminishing or negative returns in prediction accuracy. Next, we fix $M = 5$ and $T = 50$ and test the performance for different numbers of vertices $|V|$ of candidate arborescences. Table~\ref{app_Tab: Ushape Sensitivity analysis 2} presents the MSPE across different values of $|V|$. While increasing the number of vertices typically enhances prediction accuracy, it also incurs higher computational costs. Beyond a certain point, further increasing $|V|$ yields little additional improvement.

In summary, GS-BART demonstrates relatively robust performance to the choices of $T$, $M$, and $|V|$, as long as they are sufficiently large. Based on the sensitive analysis results, we used $T = 50$, $M = 5$, and $|V|=100$ as the default setting in most of our numerical studies, and we found they generally achieve a good balance between prediction accuracy and computational efficiency. Nonetheless, these hyperparameters can be further tuned to enhance the performance of GS-BART in specific applications. 
One notable advantage of ensemble tree methods lies in their ability to leverage the diversity among weak learners. Unlike traditional models, where each model must be highly accurate on its own, ensemble tree methods can achieve strong predictive performance by aggregating the outputs of multiple weak learners. As long as there is a sufficient number of multiple weak leaders, the randomness involved in constructing arborescences helps enrich the diversity of potential covariate space partitions to better capture the underlying function variation patterns.

\begin{table}[h]
\centering
\small
\begin{threeparttable}
\begin{tabular}{lcccccc}
	\hline & $M=1$ & $M=3$ & $M=5$ & $M=10$ & $M=30$ & $M=50$ \\
	\hline
	$T=10$  & 1.13 (0.21) & 1.18 (0.26) & 1.14 (0.23) & 1.13 (0.15) & 1.16 (0.22) & 1.24 (0.26) \\
	$T=25$  & 0.95 (0.18) & 0.92 (0.19) & 0.88 (0.20) & 0.83 (0.16) & 0.89 (0.20) & 0.91 (0.18) \\
	$T=50$  & 0.84 (0.19) & 0.89 (0.16) & 0.82 (0.22) & 0.82 (0.16) & 0.91 (0.14) & 0.88 (0.13) \\
	$T=75$  & 0.87 (0.16) & 0.90 (0.18) & 0.84 (0.16) & 0.83 (0.15) & 0.84 (0.10) & 0.84 (0.12) \\
	$T=100$ & 0.92 (0.15) & 0.88 (0.14) & $\mathbf{0.79}$ (0.15) & 0.81 (0.13) & 0.84 (0.11) & 0.87 (0.13) \\
	\hline
\end{tabular}
\caption{Sensitivity analysis: the averaged MSPE $\times 10$ (sd $\times 10$).}
\label{app_Tab: Ushape Sensitivity analysis 1}
\end{threeparttable}
\end{table}


\begin{table}[h]
\centering
\footnotesize
\begin{threeparttable}
\begin{tabular}{lccccccc}
	\hline & & $|V|=10$ & $|V|=25$ & $|V|=50$ & $|V|=75$ & $|V|=100$  & $|V|=200$ \\
	\hline 
	\multicolumn{2}{c}{$\text{MSPE} \times 10 \; (\text{sd} \times 10)$} 
	& 1.05 (0.16) & 1.02 (0.22) & 0.97 (0.16) & 0.90 (0.12) & $\mathbf{0.82}$ (0.22) & 0.87 (0.23) \\
	\multicolumn{2}{c}{$\text{MAPE} \times 10 \; (\text{sd} \times 10)$} 
	& 2.26 (0.13) & 2.23 (0.16) & 2.15 (0.14) & 2.10 (0.11) & $\mathbf{2.02}$ (0.15) & 2.05 (0.15) \\
	\multicolumn{2}{c}{$\text{Time}$} 
	& 69.51 & 94.44 & 141.91 & 198.22 & 250.01 & 544.29 \\
	\hline
\end{tabular}
\caption{Sensitivity analysis: the averaged MSPE $\times 10$, MAPE $\times 10$, and running time for different $|V|$.}
\label{app_Tab: Ushape Sensitivity analysis 2}
\end{threeparttable}
\end{table}


\noindent\textbf{Absolute error surface.}
The predictive absolute error surfaces of different methods are demonstrated in Figure \ref{app_Tab:Ushape Absolute error}. Compared with the competing methods, GS-BART has a smaller absolute error in most areas of the U-shape region. 

\begin{center}
\begin{figure}[h]
\centering
\includegraphics[scale=.4]{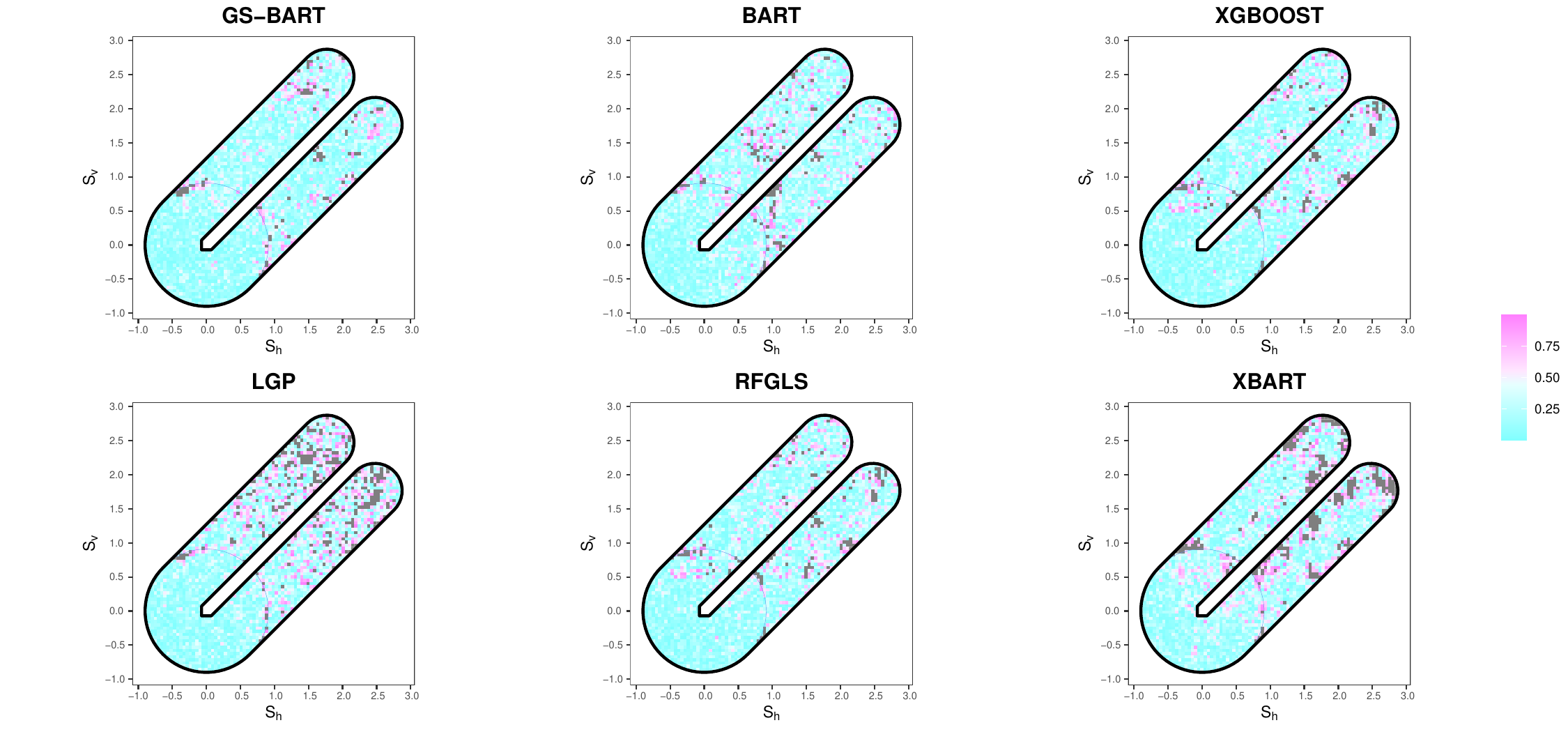}
\caption{Absolute error surfaces of various methods for the U-shape simulation example in the setting of $n=800, p=2$ and $\sigma=0.15$. Absolute error larger than 1 is marked by gray. Steel blue circles indicate discontinuity surfaces in the true function projected to $\calM$. }
\label{app_Tab:Ushape Absolute error}
\end{figure}
\end{center}


\subsection{Additional Results of the King County House Sales Example}\label{app_subsec:KingHouse-add}
We pick the living area (square feet) as an example of unstructured variable partial effect analysis since it has the highest variable importance among all the unstructured continuous variables. We choose six representative locations and show their corresponding living area partial effect plots in Figure \ref{app_fig:living area partial dependence}. Overall, the house sales price increases as the living area grows in all six representative locations. There is a significant jump in the estimated selling price of a home when it exceeds 5,000 square feet of living space, after which the trend of increasing selling price with growing living space levels off. This phenomenon may be explained by the fact that houses with more than 5,000 square feet of living space are generally considered to be more luxurious, and the cost of decorating such homes is higher, making them more expensive to sell. However, as the square footage of a home continuous increases, the marginal utility in the living experience diminishes, while the cost of repairs and maintenance continues to increase, resulting in no further increase in the sales price.

\begin{center}
\begin{figure}[h]
\centering
\includegraphics[scale=.45]{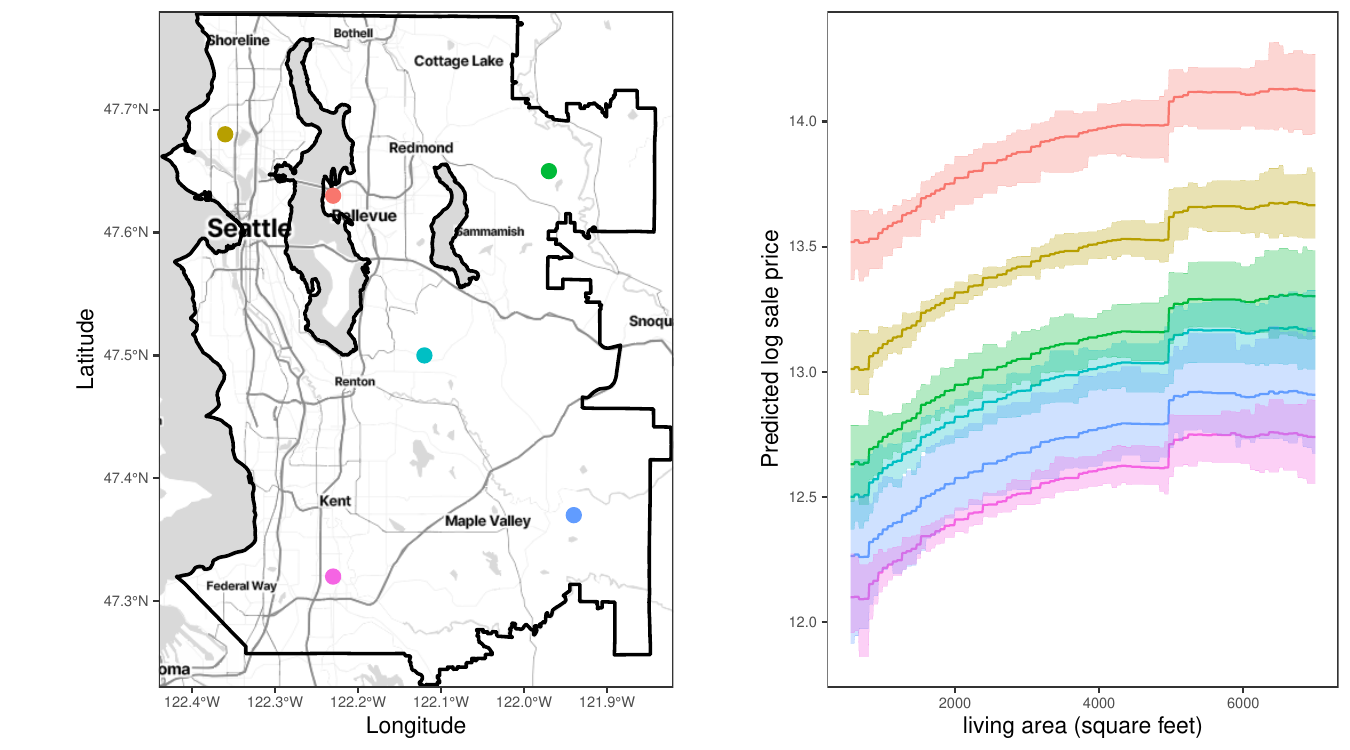}
\caption{Six representative locations are selected. Six lines represents predicted log house sales price versus living area. The shadowed ribbons represents the 95\% predictive credible intervals of six representative locations.}
\label{app_fig:living area partial dependence}
\end{figure}
\end{center}

\section{Additional Experiments on Column Subsampling}\label{app_sec:col-sub}

Column subsampling is a popular strategy to handle high-dimensional features in ensemble decision tree methods such as XGBoost and Random Forest, where each decision tree is trained only on a random subset of features. It is convenient to implement similar column subsampling strategies in GS-BART, by viewing them as particular choices of user-defined candidate graph sets. For XGBoost and Random Forests, it is generally advised to apply uniform random feature subsampling with a rate in the range of 0.5 to 0.9, unless prior knowledge suggests otherwise. 

Table~\cref{app_Tab:subcol} presents results on the U-shape dataset with $p = 12$ predictors, where models were trained
with varying numbers of randomly selected features per tree. The results indicate that the performance of 
GS-BART remains robust once the subsampling rate exceeds 0.5, with runtime decreasing
as fewer features are included at the cost of only a mild loss in accuracy.

\begin{table}[h!]
\centering
\begin{tabular}{cccc}
\toprule[1pt]
\# of Sub Cols & MSPE $\times 10$  (sd $\times 10$  )  & MAPE $\times 10$  (sd $\times 10$ ) & Time (Sec) \\ 
\hline
5     & 1.05 (0.18) & 2.23 (0.14) & 18.14 \\ 
7     & 0.93 (0.15) & 2.11 (0.12) & 24.89 \\ 
9     & 0.84 (0.11) & 2.06 (0.09) & 31.49 \\ 
12    & 0.82 (0.22) & 2.03 (0.15) & 40.16 \\ 
\bottomrule[1pt]
\end{tabular}
\caption{U-shape Dataset: MSPE, MAPE, and computation runtime (in seconds) of GS-BART using different numbers of subcolumns (out of 12 total).}
\label{app_Tab:subcol}
\end{table}

\vspace{-5mm}
\baselineskip=14pt 
\singlespacing 
\begingroup
    \setstretch{0.5}
    \bibliographystyle{apalike}
    {\small
    \bibliography{Bibliography-GS-BART}}

\begin{thebibliography}{}

\bibitem[Anderson and Ward, 2019]{anderson2019black}
Anderson, S.~C. and Ward, E.~J. (2019).
\newblock Black swans in space: modeling spatiotemporal processes with
  extremes.
\newblock {\em Ecology}, 100(1):e02403.

\bibitem[Anselin et~al., 2009]{anselin2009geoda}
Anselin, L., Syabri, I., and Kho, Y. (2009).
\newblock {GeoDa}: an introduction to spatial data analysis.
\newblock In {\em Handbook of applied spatial analysis: Software tools, methods
  and applications}, pages 73--89. Springer.

\bibitem[Arvanitidis et~al., 2019]{arvanitidis2019fast}
Arvanitidis, G., Hauberg, S., Hennig, P., and Schober, M. (2019).
\newblock Fast and robust shortest paths on manifolds learned from data.
\newblock In {\em The 22nd International Conference on Artificial Intelligence
  and Statistics}, pages 1506--1515. PMLR.

\bibitem[Blaser and Fryzlewicz, 2016]{blaser2016random}
Blaser, R. and Fryzlewicz, P. (2016).
\newblock Random rotation ensembles.
\newblock {\em The Journal of Machine Learning Research}, 17(1):126--151.

\bibitem[Chipman et~al., 2010]{chipman2010bart}
Chipman, H.~A., George, E.~I., and McCulloch, R.~E. (2010).
\newblock {BART}: {B}ayesian additive regression trees.
\newblock {\em The Annals of Applied Statistics}, 4(1):266--298.

\bibitem[Clauset et~al., 2004]{clauset2004community}
Clauset, A., Newman, M. E.~J., and Moore, C. (2004).
\newblock Finding community structure in very large networks.
\newblock {\em Phys. Rev. E}, 70:066111.

\bibitem[Dahl et~al., 2024]{dahl2024modeling}
Dahl, B.~K., Heaton, M.~J., Warr, R.~L., Fisher, J.~D., and Schultz, G.~G.
  (2024).
\newblock Modeling crash risk on roadway networks using {B}ayesian regression
  trees.
\newblock {\em Technometrics}, pages 1--13.

\bibitem[Deshpande, 2024]{deshpande2024flexbart}
Deshpande, S.~K. (2024).
\newblock flexbart: Flexible {B}ayesian regression trees with categorical
  predictors.
\newblock {\em Journal of Computational and Graphical Statistics}, pages 1--10.

\bibitem[Friedman, 1991]{friedman1991multivariate}
Friedman, J.~H. (1991).
\newblock Multivariate adaptive regression splines.
\newblock {\em The annals of statistics}, 19(1):1--67.

\bibitem[Hada et~al., 2023]{hada2023sparse}
Hada, S.~S., Carreira-Perpi{\~n}{\'a}n, M.~{\'A}., and Zharmagambetov, A.
  (2023).
\newblock Sparse oblique decision trees: A tool to understand and manipulate
  neural net features.
\newblock {\em Data Mining and Knowledge Discovery}, pages 1--40.

\bibitem[He and Hahn, 2023]{he2023stochastic}
He, J. and Hahn, P.~R. (2023).
\newblock Stochastic tree ensembles for regularized nonlinear regression.
\newblock {\em Journal of the American Statistical Association},
  118(541):551--570.

\bibitem[Kelner and Madry, 2009]{kelner2009faster}
Kelner, J.~A. and Madry, A. (2009).
\newblock Faster generation of random spanning trees.
\newblock In {\em 2009 50th Annual IEEE Symposium on Foundations of Computer
  Science}, pages 13--21. IEEE.

\bibitem[Kim and Ro{\v{c}}kov{\'a}, 2025]{kim2025mixing}
Kim, J. and Ro{\v{c}}kov{\'a}, V. (2025).
\newblock On mixing rates for {B}ayesian {CART}.
\newblock {\em Electronic Journal of Statistics}, 19(2):3041--3067.

\bibitem[Li et~al., 2023]{li2023importance}
Li, G., Smith, A., and Zhou, Q. (2023).
\newblock Importance is important: A guide to informed importance tempering
  methods.
\newblock {\em arXiv preprint arXiv:2304.06251}.

\bibitem[Linero, 2018]{linero2018bayesian}
Linero, A.~R. (2018).
\newblock {B}ayesian regression trees for high-dimensional prediction and
  variable selection.
\newblock {\em Journal of the American Statistical Association},
  113(522):626--636.

\bibitem[Linero, 2024]{linero2024generalized}
Linero, A.~R. (2024).
\newblock Generalized {B}ayesian additive regression trees models: Beyond
  conditional conjugacy.
\newblock {\em Journal of the American Statistical Association}, pages 1--14.

\bibitem[Luo et~al., 2021]{luo2021bayesian}
Luo, Z.~T., Sang, H., and Mallick, B. (2021).
\newblock A {B}ayesian contiguous partitioning method for learning clustered
  latent variables.
\newblock {\em Journal of Machine Learning Research}, 22(37):1--52.

\bibitem[Luo et~al., 2022]{luo2022bamdt}
Luo, Z.~T., Sang, H., and Mallick, B. (2022).
\newblock {BAMDT}: {B}ayesian additive semi-multivariate decision trees for
  nonparametric regression.
\newblock In {\em International Conference on Machine Learning}, pages
  14509--14526. PMLR.

\bibitem[Mohammadi et~al., 2020]{mohammadi2020continuous}
Mohammadi, R., Pratola, M., and Kaptein, M. (2020).
\newblock Continuous-time birth-death {MCMC} for {B}ayesian regression tree
  models.
\newblock {\em Journal of Machine Learning Research}, 21(201):1--26.

\bibitem[Murray, 2021]{murray2021log}
Murray, J.~S. (2021).
\newblock Log-linear {B}ayesian additive regression trees for multinomial
  logistic and count regression models.
\newblock {\em Journal of the American Statistical Association},
  116(534):756--769.

\bibitem[Nandy et~al., 2017]{nandy2017additive}
Nandy, S., Lim, C.~Y., and Maiti, T. (2017).
\newblock Additive model building for spatial regression.
\newblock {\em Journal of the Royal Statistical Society: Series B (Statistical
  Methodology)}, 79(3):779--800.

\bibitem[Niu et~al., 2019]{niu2019intrinsic}
Niu, M., Cheung, P., Lin, L., Dai, Z., Lawrence, N., and Dunson, D. (2019).
\newblock Intrinsic {G}aussian processes on complex constrained domains.
\newblock {\em Journal of the Royal Statistical Society: Series B (Statistical
  Methodology)}, 81(3):603--627.

\bibitem[Niu et~al., 2023]{niu2023intrinsic}
Niu, M., Dai, Z., Cheung, P., and Wang, Y. (2023).
\newblock Intrinsic {G}aussian process on unknown manifolds with probabilistic
  metrics.
\newblock {\em Journal of Machine Learning Research}, 24(104):1--42.

\bibitem[Pratola, 2016]{pratola2016efficient}
Pratola, M.~T. (2016).
\newblock Efficient {M}etropolis-{H}astings proposal mechanisms for {B}ayesian
  regression tree models.
\newblock {\em Bayesian Analysis}, 11(3):885--911.

\bibitem[Ro{\v{c}}kov{\'a} and Van~der Pas, 2020]{rovckova2020posterior}
Ro{\v{c}}kov{\'a}, V. and Van~der Pas, S. (2020).
\newblock Posterior concentration for {B}ayesian regression trees and forests.
\newblock {\em The Annals of Statistics}, 48(4):2108--2131.

\bibitem[Ronen et~al., 2022]{ronen2022mixing}
Ronen, O., Saarinen, T., Tan, Y.~S., Duncan, J., and Yu, B. (2022).
\newblock A mixing time lower bound for a simplified version of {BART}.
\newblock {\em arXiv preprint arXiv:2210.09352}.

\bibitem[Saha et~al., 2023]{saha2023random}
Saha, A., Basu, S., and Datta, A. (2023).
\newblock Random forests for spatially dependent data.
\newblock {\em Journal of the American Statistical Association},
  118(541):665--683.

\bibitem[Saha and Datta, 2018]{saha2018brisc}
Saha, A. and Datta, A. (2018).
\newblock {BRISC}: bootstrap for rapid inference on spatial covariances.
\newblock {\em Stat}, 7(1):e184.

\bibitem[Stone and Gosling, 2024]{stone2024addivortes}
Stone, A.~J. and Gosling, J.~P. (2024).
\newblock {AddiVortes}:({B}ayesian) additive {V}oronoi tessellations.
\newblock {\em Journal of Computational and Graphical Statistics}, pages 1--13.

\bibitem[Teixeira et~al., 2019]{teixeira2019bayesian}
Teixeira, L.~V., Assun{\c{c}}{\~a}o, R.~M., and Loschi, R.~H. (2019).
\newblock {B}ayesian space-time partitioning by sampling and pruning spanning
  trees.
\newblock {\em Journal of Machine Learning Research}, 20:85--1.

\bibitem[Voronoi, 1908]{voronoi1908nouvelles}
Voronoi, G. (1908).
\newblock Nouvelles applications des param{\`e}tres continus {\`a} la
  th{\'e}orie des formes quadratiques. premier m{\'e}moire. sur quelques
  propri{\'e}t{\'e}s des formes quadratiques positives parfaites.
\newblock {\em Journal f{\"u}r die reine und angewandte Mathematik (Crelles
  Journal)}, 1908(133):97--102.

\bibitem[Zanella, 2020]{zanella2020informed}
Zanella, G. (2020).
\newblock Informed proposals for local {MCMC} in discrete spaces.
\newblock {\em Journal of the American Statistical Association},
  115(530):852--865.

\bibitem[Zhan and Datta, 2025]{zhan2024neural}
Zhan, W. and Datta, A. (2025).
\newblock Neural networks for geospatial data.
\newblock {\em Journal of the American Statistical Association},
  120(549):535--547.

\bibitem[Zhang et~al., 2007]{zhang2007spatially}
Zhang, S., Shih, Y.-C.~T., and M{\"u}ller, P. (2007).
\newblock A spatially-adjusted {B}ayesian additive regression tree model to
  merge two datasets.
\newblock {\em Bayesian Analysis}, 2(3):611--634.

\bibitem[Zhou and Smith, 2022]{zhou2022rapid}
Zhou, Q. and Smith, A. (2022).
\newblock Rapid convergence of informed importance tempering.
\newblock In {\em International Conference on Artificial Intelligence and
  Statistics}, pages 10939--10965. PMLR.

\end{thebibliography}
\endgroup 

\newpage 

\appendix 

\end{document}